\documentclass[reqno,a4paper,11pt]{amsart}
\usepackage{amsmath,amstext,amsfonts,amsbsy,eucal,amssymb}
\usepackage[latin1]{inputenc}

\numberwithin{equation}{section}

\newtheorem{theorem}{Theorem}[section]
\newtheorem{lemma}[theorem]{Lemma}

\newtheorem{proposition}[theorem]{Proposition}

\newtheorem{conjecture}[theorem]{Conjecture}

\theoremstyle{definition}
\newtheorem{definition}[theorem]{Definition}
\newtheorem{example}[theorem]{Example}
\newtheorem{remark}[theorem]{Remark}

\begin{document}

\parskip 4pt
\baselineskip 16pt

%%%%%%%%%%%%%%%%%%%%%%%%%%%%%%%%%
%%%%%%%%%%%%%%%%%%%%%%%%%%%%%%%%%

\title[Invariants for a family of discrete equations]
{Invariants for a family of discrete equations with the Laurent property}

\author{Andrei K. Svinin}
\email{svinin@icc.ru}
\address{Matrosov Institute for System Dynamics and Control Theory, Siberian Branch of Russian Academy of Sciences, PO Box 292, 664033 Irkutsk, Russia}
\date{\today}
\keywords{Somos-5 equation, Gale-Robinson equation, Laurent property, integer sequence.}

\begin{abstract}
Recently, we have found an infinite family of  homogeneous discrete equations of odd order possessing the Laurent property. The first representative of this family is the well-known Somos-5 equation, which under certain conditions generates the integer sequence A006721, which has numerous applications.

In this work, we construct a finite set of independent invariants for our equations. We show, through examples, that the presence of these invariants allows us to find a more general criterion for the integrality of sequences compared to what the usual Laurent property provides.
\end{abstract}

\maketitle

\section*{Introduction}
\label{sec1}

In 1989, Michael Somos, in his research on elliptic functions, discovered several nonlinear recurrence relations which, despite their nonlinearity, generate integer sequences for suitable initial data. One of them, which is directly relevant to the present work, is the bilinear recurrence relation
\[
t_nt_{n+5}= t_{n+1}t_{n+4}+ t_{n+2}t_{n+3},
\]
now known as Somos-5. Somos posed the problem of proving -- and, even better, explaining -- this integrality. Specifically for the Somos-5 equation, one needed to prove the integrality of the sequence defined by the initial conditions $t_j=1$ for $j=0,1,2,3,4$.
In the OEIS online encyclopedia, this sequence is identified as \textrm{A006721}. It later turned out that this sequence has numerous diverse applications, listing which would require a separate survey. We give key references here. In \cite{Buchholz}, a connection was found between the sequence \textrm{A006721} and an infinite family of Heronian triangles with two rational medians. The same paper mentions a relation of this sequence to an elliptic curve and related special functions. In \cite{Eager, Speyer}, one can find combinatorial interpretations of the terms of the sequence \textrm{A006721}. This sequence is also of interest from the viewpoint of number theory \cite{Davis, Robinson}.

Somos's discovery sparked a wave of research, since the nature of the integrality of these sequences remained unclear for some time. David Gale, in his article \cite{Gale}, gave a survey of the state of affairs at that time. One of his observations was that there actually exist quite many such polynomial recurrence relations and several examples had been found. In particular, he mentioned an infinite family of recurrence relations of Raphael Robinson:
\[
t_nt_{n+N}= t_{n+1}t_{n+N-1}+ t_{n+2}t_{n+N-2},\;\; N\geq 4.
\]
Gale, in turn, conjectured that the recurrence relation
\begin{equation}
t_{n}t_{n+N}=\alpha t_{n+a}t_{n+N-a}+\beta t_{n+c}t_{n+N-c},
\label{6753098}
\end{equation}
for any $N\geq 4$ and $1\leq a<c\leq \left\lfloor N/2 \right\rfloor$, with arbitrary integer coefficients $\alpha$ and $\beta$ and unit initial conditions, also generates an integer sequence. Nowadays, this relation is called the Gale--Robinson equation. More precisely, it is an infinite family of equations, where $N$ is the order of the corresponding equation.

Subsequently, the Gale--Robinson equations gained enormous popularity due to applications in combinatorics and theoretical physics. There currently exists an extensive amount of literature devoted to the combinatorial meaning of the integer sequences generated by Gale--Robinson equations, including connections to theoretical physics \cite{Eager}. Perhaps the most intuitive interpretation of the sequences generated by Gale--Robinson equations with coefficients $\left(\alpha, \beta\right)$ equal to unity was found in the works \cite{Bousquet,Speyer}.
\begin{definition}
We say that (\ref{6753098}) is a Gale--Robinson equation of type $\left(N, a, c\right)$.
\end{definition}

A major breakthrough in the problem of explaining the integrality of sequences generated by certain polynomial recurrence relations was achieved by Andrei Zelevinsky and Sergey Fomin. In the early 2000s, they were developing their theory of cluster algebras. In \cite{Fomin}, using this theory, they proved, as a special case, the Gale--Robinson conjecture \cite{Fomin}.
The essence is that these equations have the Laurent property, which Fomin and Zelevinsky discovered in their developing theory of cluster algebras, and the Gale--Robinson equations with $\left(\alpha, \beta\right)=\left(1, 1\right)$ represent recurrence relations naturally arising in the theory of cluster algebras.

Consider the recurrence relation
\begin{equation}
t_nt_{n+N}=L\left(t_{n+1},\ldots, t_{n+N-1}\right),
\label{88755497}
\end{equation}
where it is assumed that $L$ is some (Laurent) polynomial in its variables.
\begin{definition}  \label{094533}
Equation (\ref{88755497}) has the Laurent property if $t_n\in\mathbb{Z}[t_0^{\pm 1},\ldots, t_{N-1}^{\pm 1}],\;\; \forall n\in\mathbb{Z}$.
\end{definition}
\begin{remark} 
In the literature, for virtually all equations with the Laurent property of the form (\ref{88755497}), the right-hand side $L$ is a polynomial in its variables.
\end{remark} 

As an elementary consequence of the Laurent property, we obtain that if $t_j=\pm 1$ for $j=0,\ldots, N-1$, then the corresponding recurrence relation generates some integer sequence. This condition is sufficient for integrality of the sequence, though it is not necessary.

Despite the fact that the condition in Definition \ref{094533} looks extremely restrictive, it turns out that there exist quite many recurrence relations with the Laurent property. Such equations either follow from the theory of cluster algebras, as for example in \cite{Fordy}, or within the framework of the approach of Lam and Pylyavskyy \cite{Lam}, which generalizes the theory of cluster algebras. Both approaches are, in essence, combinatorial.

In a recent paper \cite{Svinin7}, we presented an infinite family of homogeneous recurrence relations of the form
\begin{equation}
t_nt_{n+2g+3}=\frac{\sum_{j=0}^{g}\alpha_j B^j_{2g-j}(n+1)}{\prod_{j=3}^{2g} t_{n+j}},\; g\geq 1, 
\label{8878886500097}
\end{equation}
for which, as we proved, the Laurent property holds. Note that relation (\ref{8878886500097}) represents an equation of the form (\ref{88755497}) with parameters $\alpha_j$, which are considered arbitrary.
The right-hand side of (\ref{8878886500097}) is defined by certain homogeneous discrete polynomials $B^k_s(n)$ in $t_n$.\footnote{We regularly use the term discrete polynomial. In this context it means that $B^k_s(n)$ is some polynomial in a finite number of variables $t_n, t_{n+1},\ldots$} These polynomials are defined by means of a recurrence relation. We will write this relation in the next section; for now we only note that $B^j_{2g-j}(n)$ is a homogeneous polynomial in the variables $\left(t_n,\ldots, t_{n+2g+1}\right)$. The degree of this polynomial is $2g$. Thus, relation (\ref{8878886500097}), for any $g\geq 1$, is a homogeneous equation. Note that the right-hand side of our equations of the form (\ref{8878886500097}) is, generally speaking, not a polynomial, but a Laurent polynomial.

For clarity, we write out the first two representatives of this family. For $g=1$, the product in the denominator on the right-hand side of (\ref{8878886500097}) should be taken as unity. In this case, (\ref{8878886500097}) turns into the Somos-5 equation with parameters, namely\footnote{Hereafter we call this equation simply the Somos-5 equation.}
\begin{equation}
t_nt_{n+5}=  \alpha_0 t_{n+2}t_{n+3}+\alpha_1 t_{n+1}t_{n+4}.
\label{5}
\end{equation}
The next representative of this family is the equation
\begin{equation}
t_nt_{n+7}=\alpha_0 t_{n+2}t_{n+5}+\alpha_1 \frac{t_{n+1}t_{n+4}^2t_{n+5}+t_{n+2}^2t_{n+5}^2+t_{n+2}t_{n+3}^2t_{n+6}}{t_{n+3}t_{n+4}}+\alpha_2 t_{n+1}t_{n+6}.
\label{7}
\end{equation}
As already noted, Somos-5, like all Gale--Robinson equations (\ref{6753098}), fits well into the theory of cluster algebras \cite{Fordy}. However, for our equations of the form (\ref{8878886500097}), such a combinatorial interpretation is not yet known.

Let us highlight a key fact about our equations of the form (\ref{8878886500097}), which can be viewed, on the one hand, as a property of these equations, and on the other hand, as their definition.
Let $\left( u_n \right)_{n\in\mathbb{Z}}$ be the unknown sequence. For each $g\geq 1$, we define the associated recurrence relation
\begin{equation}
\prod_{j=0}^{2g} u_{n+j}=\sum_{j=0}^g \alpha_j T^j_{2g-j}(n+1),
\label{6766438}
\end{equation}
where $T^k_s(n)$ are certain discrete homogeneous polynomials in $u_n$, which will be defined in the next section. The fact is that by means of the substitution
\begin{equation}
u_n=\frac{t_nt_{n+3}}{t_{n+1}t_{n+2}}
\label{676678888768}
\end{equation}
the relation (\ref{6766438}) reduces to the homogeneous form (\ref{8878886500097}).
In what follows, for convenience, for any $g\geq 1$, we will denote the corresponding equation by $R_{2g+3}$.
\begin{definition}
We call an $R_{2g+3}$-sequence any numerical sequence $\left(t_n\right)_{n\in\mathbb{Z}}$ satisfying $R_{2g+3}$ with some data:
\[
\left(t_0,\ldots, t_{2g+2}; \alpha_0,\ldots, \alpha_g\right)\in\mathbb{C}^{3g+4}.
\]
\end{definition}

We list some properties of the equations in our family. In principle, all these properties are listed and proved in \cite{Svinin7}.
\begin{proposition} \label{0027}
For any $g\geq 1$, $R_{2g+3}$ admits a reduction to the Gale--Robinson equation (\ref{6753098}) of type $\left(N, a, c\right)=\left(2g+3, 1, 2\right)$.
\end{proposition}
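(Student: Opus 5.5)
The plan is to show that the reduction is obtained simply by switching off the intermediate parameters: setting $\alpha_1=\cdots=\alpha_{g-1}=0$ in $R_{2g+3}$ should give exactly the Gale--Robinson equation of type $\left(2g+3,1,2\right)$, with $\beta=\alpha_0$ and $\alpha=\alpha_g$. The case $g=2$ already shows this: putting $\alpha_1=0$ in (\ref{7}) leaves $t_nt_{n+7}=\alpha_0t_{n+2}t_{n+5}+\alpha_2t_{n+1}t_{n+6}$. For $g=1$ there are no intermediate parameters, and (\ref{5}) is already of type $(5,1,2)$. So it suffices to identify the two extreme terms of (\ref{8878886500097}), namely
\[
\frac{B^0_{2g}(n+1)}{\prod_{j=3}^{2g}t_{n+j}}=t_{n+2}t_{n+2g+1},\qquad
\frac{B^g_{g}(n+1)}{\prod_{j=3}^{2g}t_{n+j}}=t_{n+1}t_{n+2g+2}.
\]

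I would carry this out on the $u$-side, in the form (\ref{6766438}), and then transfer the result through (\ref{676678888768}). The key computation is the telescoping identity
\[
\prod_{j=0}^{M-1}u_{n+j}=\frac{t_nt_{n+M+2}}{t_{n+2}t_{n+M}}.
\]
To prove it, take logarithms: the exponents $f_j-f_{j+1}-f_{j+2}+f_{j+3}$ sum to $f_0+f_{M+2}-f_2-f_M$. With $M=2g+1$, the left-hand side of (\ref{6766438}) becomes $t_nt_{n+2g+3}/(t_{n+2}t_{n+2g+1})$. Dividing the Gale--Robinson equation of type $(2g+3,1,2)$ by $t_{n+2}t_{n+2g+1}$ then puts it in the form
\[
\prod_{j=0}^{2g}u_{n+j}=\beta+\alpha\,\frac{t_{n+1}t_{n+2g+2}}{t_{n+2}t_{n+2g+1}}.
\]
I would next rewrite the last ratio as a monomial (or simple Laurent monomial) in the $u$'s, using the same telescoping identity with shifted ranges. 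For example, for $g=1$ it equals $u_{n+1}$, which matches $T^1_1(n+1)$ in the Somos-5 case.

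It then remains to compute the two extreme polynomials $T^0_{2g}$ and $T^g_g$ from the defining recurrence of the $T^k_s$ given in the next section. I expect $T^0_{s}=1$, by induction on $s$. I also expect $T^g_g(n+1)$ to be the single monomial in $u_{n+1},\dots,u_{n+2g-1}$ that the telescoping identity produces from $t_{n+1}t_{n+2g+2}/(t_{n+2}t_{n+2g+1})$. Both facts would be proved by induction on $s$ (respectively on $k=s$) directly from the recurrence, checking at each step that only one term survives. Once these two identifications are in hand, (\ref{6766438}) with $\alpha_j=0$ for $0<j<g$ coincides term by term with the displayed $u$-form of the Gale--Robinson equation. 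Multiplying back by $t_{n+2}t_{n+2g+1}$ then gives the claim in the $t$-variables.

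The main obstacle is the second identification, the closed form of $T^g_g$ (equivalently $B^g_g$). This requires unwinding the recurrence defining $T^k_s$ along the diagonal $k=s$. There the recurrence may mix several lower-index terms, and one must show that all but one monomial cancel or vanish. I would first compute $T^2_2$ and $T^3_3$ by hand to guess the pattern, and then prove it by induction.
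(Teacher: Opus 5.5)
Your proposal is correct and is essentially the paper's argument: $R_{2g+3}$ is written with its two extreme terms $B^0_{2g}(n+1)/\prod_{j=3}^{2g}t_{n+j}=t_{n+2}t_{n+2g+1}$ and $B^g_g(n+1)/\prod_{j=3}^{2g}t_{n+j}=t_{n+1}t_{n+2g+2}$ separated out, and switching off $\alpha_1,\dots,\alpha_{g-1}$ gives type $(2g+3,1,2)$ with $\alpha=\alpha_g$, $\beta=\alpha_0$. The step you flag as the main obstacle is immediate: $T^0_s=1$ is the initial datum, and for $s=k$ the recurrence (\ref{676678}) has the single term $u_nT^{k-1}_{k-1}(n+2)$, so $T^g_g(n+1)=\prod_{j=0}^{g-1}u_{n+1+2j}$ as in (\ref{6799999000988}), which telescopes to $t_{n+1}t_{n+2g+2}/(t_{n+2}t_{n+2g+1})$.
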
 
\begin{proposition}   \label{00444327}
Let $\left( t_n \right)_{n\in\mathbb{Z}}$ be an arbitrary $R_{2g+3}$-sequence. The transformation $t_n\mapsto AB^n t_n$, where $A$ and $B$ are arbitrary nonzero constants, yields another $R_{2g+3}$-sequence.
\end{proposition}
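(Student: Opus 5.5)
The invariance follows from the substitution (\ref{676678888768}) rather than from the explicit right-hand side of (\ref{8878886500097}), whose polynomials $B^j_{2g-j}$ have not yet been written down. The plan is to show that $u_n$ does not change under the rescaling and then transport this back to $t_n$.

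\emph{Step 1: $u_n$ is unchanged.} Put $\tilde t_n = AB^n t_n$. Then
\[
\tilde u_n=\frac{\tilde t_n\tilde t_{n+3}}{\tilde t_{n+1}\tilde t_{n+2}}
=\frac{A^2B^{2n+3}}{A^2B^{2n+3}}\,\frac{t_nt_{n+3}}{t_{n+1}t_{n+2}}=u_n ,
\]
since both exponents of $B$ equal $2n+3$. Consequently $(\tilde u_n)$ satisfies (\ref{6766438}) with the same parameters $\alpha_0,\ldots,\alpha_g$, because it is literally the same sequence.

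\emph{Step 2: back to $t_n$.} The paper states that substituting (\ref{676678888768}) into (\ref{6766438}) gives (\ref{8878886500097}). We may take this to mean that, after multiplying by the common denominator $\prod_{j=3}^{2g}t_{n+j}$, (\ref{6766438}) written in terms of $t$ is equivalent to $R_{2g+3}$, with the factors relating them being nonvanishing monomials. One still has to check that this equivalence goes in the direction we need, namely that a $t$-sequence whose $u$ satisfies (\ref{6766438}) satisfies $R_{2g+3}$.

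This is where care is needed, because (\ref{6766438}) does not obviously determine $t$ uniquely from $u$. We therefore argue through the homogeneity of $R_{2g+3}$ directly:
\begin{itemize}
\item The right-hand side of (\ref{8878886500097}) is a ratio of homogeneous polynomials of total degree $2g-(2g-2)=2$.
\item Each monomial $\prod_k t_{n+k}^{e_k}$ in the numerator $\sum_j \alpha_j B^j_{2g-j}(n+1)$ gets multiplied by $B^{\sum_k k e_k}$ under the rescaling. So we need every such monomial to have the same weighted degree $\sum_k k e_k$. This weighted homogeneity should follow from Step 1: the $u$-relation uses only the weight-zero combinations $u_n$. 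After clearing denominators, each term therefore has the weight of $\prod_{j=0}^{2g}(t_{n+j}t_{n+j+3})/(t_{n+j+1}t_{n+j+2})$ times the fixed denominator monomial.
\end{itemize}
One can check the pattern on (\ref{5}) and (\ref{7}). In (\ref{5}) the weights are $0+5=2+3=1+4$. In (\ref{7}) every term has total weight $7$; for instance, $t_{n+1}t_{n+4}^2t_{n+5}$ divided by $t_{n+3}t_{n+4}$ has weight $1+8+5-3-4=7$.

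\emph{Step 3: conclude.} Both sides of (\ref{8878886500097}) then scale by the factor $A^2B^{2n+2g+3}$, so $(\tilde t_n)$ satisfies $R_{2g+3}$ with the same $\alpha_j$ and initial data $(AB^jt_j)_{j=0}^{2g+2}$.

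The main obstacle is Step 2: making rigorous the claim that every monomial of $B^j_{2g-j}(n+1)/\prod_{j=3}^{2g} t_{n+j}$ has weight $2n+2g+3$. I would first try to derive this from the fact that the $t$-form is obtained from the $u$-form, which is invariant under $t_n\mapsto AB^nt_n$, by multiplying by the single monomial $\prod_{j=0}^{2g}(t_{n+j+1}t_{n+j+2})/\prod_{j=3}^{2g}t_{n+j}$. If that route is unclear, I would instead verify the weight property by induction using the defining recurrence for $B^k_s$ given in the next section.
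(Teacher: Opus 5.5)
Your route is the natural one: $u_n$ is invariant under the gauge, and you transport this back to $t_n$. Steps 1 and 3 are correct. The paper gives no proof of this proposition, deferring to earlier work, but the ingredient you are missing is already in its Section 1.

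The gap is Step 2, which you leave open, and the obstacle you name there is not real. You never need to recover $t$ from $u$. The sequence $\tilde t_n=AB^nt_n$ is given explicitly, so all you need is an identity, at each fixed $n$, relating the two forms of the equation. Your tentative multiplier $\prod_{j=0}^{2g}(t_{n+j+1}t_{n+j+2})/\prod_{j=3}^{2g}t_{n+j}$ is also wrong. It has degree $2g+4$, while the right-hand side of (\ref{8878886500097}) has degree $2$. The reason is that $\prod_{j=0}^{2g}u_{n+j}$ telescopes to $t_nt_{n+2g+3}/(t_{n+2}t_{n+2g+1})$.

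Combine that telescoping with the definition (\ref{65400000987}), which gives $B^j_{2g-j}(n+1)=T^j_{2g-j}(n+1)\prod_{i=2}^{2g+1}t_{n+i}$. You then get the identity of rational functions
\[
t_nt_{n+2g+3}-\frac{\sum_{j=0}^{g}\alpha_jB^j_{2g-j}(n+1)}{\prod_{j=3}^{2g}t_{n+j}}
=t_{n+2}t_{n+2g+1}\Bigl(\prod_{j=0}^{2g}u_{n+j}-\sum_{j=0}^{g}\alpha_jT^j_{2g-j}(n+1)\Bigr).
\]
So (\ref{8878886500097}) is exactly (\ref{6766438}) multiplied by the single monomial $t_{n+2}t_{n+2g+1}$.

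By your Step 1, the bracket is unchanged under $t_n\mapsto AB^nt_n$. The prefactor picks up exactly $A^2B^{2n+2g+3}$. Hence the left-hand side for $(\tilde t_n)$ is $A^2B^{2n+2g+3}$ times the one for $(t_n)$, and so it vanishes. This scaling law is an identity of Laurent polynomials with only $t_{n+3},\ldots,t_{n+2g}$ in the denominators, so it holds wherever (\ref{8878886500097}) is defined.

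That closes the argument with no monomial-by-monomial weight bookkeeping and no induction on the recurrences for $B^k_s$. The uniform weight $2n+2g+3$ that you wanted follows as a corollary.
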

\begin{proposition}   \label{00098767}
The number of monomials defining $R_{2g+3}$ is $F_{2g+1}+1$, where $F_k$ denotes the $k$-th Fibonacci number.
\end{proposition}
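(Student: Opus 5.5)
The plan is to count monomials on the $u$-side, in the form (\ref{6766438}), and then transfer the count to $R_{2g+3}$ through the substitution (\ref{676678888768}). First I check the expected bookkeeping against the two displayed cases. For $g=1$, equation (\ref{5}) has $3=F_3+1$ monomials. For $g=2$, equation (\ref{7}) has $1+(1+3+1)=6=F_5+1$ monomials. So the product on the left side should account for the term $+1$. The right side should contribute $F_{2g+1}$ monomials, split according to the coefficients $\alpha_j$ as $1,3,1$ for $g=2$ and $1,1$ for $g=1$. This pattern suggests the identity
\[
F_{2g+1}=\sum_{j=0}^{g}\binom{2g-j}{j},
\]
which is the classical shallow-diagonal sum of Pascal's triangle. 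It can be proved by induction from $F_{m+1}=F_m+F_{m-1}$ and Pascal's rule.

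The core step is therefore the claim that $T^k_s(n)$ is a sum of exactly $\binom{s}{k}$ distinct monomials, each with coefficient $1$ (here $k=j$ and $s=2g-j$). I will prove this by induction on $s$, using the recurrence that defines $T^k_s$ (given in the next section). I expect that recurrence to have the shape $T^k_s(n)=(\text{monomial})\cdot T^k_{s-1}(\cdot)+(\text{monomial})\cdot T^{k-1}_{s-1}(\cdot)$, mirroring Pascal's rule. If so, the count follows at once, provided two conditions hold: no cancellation occurs, which is automatic since all coefficients are positive, and the two parts share no monomial. I plan to prove disjointness by exhibiting a distinguishing variable, for instance the exponent of the last or first variable $u_{n+\cdot}$, which differs between the two summands. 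Different values of $j$ give different multidegree patterns, or at least they sit under distinct parameters $\alpha_j$. Because the $\alpha_j$ are free parameters, monomials attached to different $\alpha_j$ count separately in any case.

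The transfer to $t$ is done as follows. The substitution (\ref{676678888768}) sends each monomial in the $u$'s to a Laurent monomial in the $t$'s. On exponent vectors it acts as multiplication of the generating polynomial by $1-x-x^2+x^3=(1-x)^2(1+x)\neq 0$, which is injective. Hence distinct $u$-monomials go to distinct $t$-monomials. Multiplying through by the common denominator, which yields the form (\ref{8878886500097}), is a bijection on monomials. The left side $\prod u_{n+j}$ becomes $t_nt_{n+2g+3}$ times the same factor, and this monomial is distinct from all those on the right-hand side because it alone involves $t_n$ and $t_{n+2g+3}$. The total count is then $1+\sum_j\binom{2g-j}{j}=F_{2g+1}+1$.

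The main obstacle is the combinatorial step, namely establishing that $|T^k_s|=\binom{s}{k}$ with no coincident monomials. This depends on the precise recurrence for $T^k_s$. If that recurrence is not of Pascal type, I would first try to describe the monomials of $T^k_s$ explicitly as indexed by $k$-subsets of an $s$-element set, and prove that description by induction instead.
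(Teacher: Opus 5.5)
Your proof is correct. The paper gives no argument of its own for this proposition; it lists it among properties proved in the author's earlier work. So there is no in-paper proof to compare against, but your route is exactly what the paper's own machinery supports.

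\begin{itemize}
\item The recurrence that actually defines $T^k_s$, namely (\ref{676678}), is not of Pascal type. However, Lemma~\ref{65439876} gives $T^k_s(n)=T^k_{s-1}(n)+u_{n+s+k-2}T^{k-1}_{s-1}(n)$. Since $T^k_{s-1}(n)$ involves only $u_n,\ldots,u_{n+k+s-3}$, your distinguishing-variable criterion applies as stated.
\item More directly, the indices $\lambda_j+j-1$ in (\ref{67999998}) increase with gaps of at least $2$. So every monomial is squarefree and determines $\lambda$, which gives $\binom{s}{k}$ distinct monomials, each with coefficient $1$.
\item The identity $\sum_{j=0}^{g}\binom{2g-j}{j}=F_{2g+1}$ then counts the right-hand side.
\item Your injectivity argument for the substitution (\ref{676678888768}) rules out collisions or cancellations after passing to the $t$-variables. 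On exponent vectors the substitution is multiplication by $(1-x)^2(1+x)$ in the domain $\mathbb{Z}[x^{\pm1}]$, hence injective.
\item The left-hand monomial $t_nt_{n+2g+3}$ is the only one involving $t_n$, which gives the extra $+1$ and the total $F_{2g+1}+1$.
\end{itemize}
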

\begin{proposition}   \label{004432767}
Set $n=0$ in (\ref{8878886500097}). The right-hand side $L(t_{1},\ldots, t_{2g+2})$ of this relation, which is generally a Laurent polynomial, has the symmetry $L(t_{1},\ldots, t_{2g+2}) = L(t_{2g+2},\ldots, t_{1})$.
\end{proposition}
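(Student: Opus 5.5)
We will derive the reversal symmetry of $L$ from a reversal symmetry of the associated $u$-equation (\ref{6766438}), pulled back through the substitution (\ref{676678888768}). The underlying idea is simple: reversing the order of the $t$'s corresponds to reversing the order of the $u$'s. If $\tilde t_k=t_{2g+3-k}$, then
\[
\tilde u_k=\frac{\tilde t_k\tilde t_{k+3}}{\tilde t_{k+1}\tilde t_{k+2}}=\frac{t_{2g+3-k}\,t_{2g-k}}{t_{2g+2-k}\,t_{2g+1-k}}=u_{2g-k},
\]
so the window $(t_0,\ldots,t_{2g+3})$ read backwards corresponds to the window $(u_0,\ldots,u_{2g})$ read backwards.

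\textbf{Step 1: symmetry of the $T$-polynomials.} From the defining recurrence for $T^k_s(n)$ given in the next section, we would prove by induction on $s$ (for each fixed $k$) that $T^j_{2g-j}(1)$, viewed as a polynomial in the finitely many $u$'s it involves, is invariant under the reflection of indices fixing the centre of the window $u_1,\ldots,u_{2g-1}$, namely $u_{m}\mapsto u_{2g-m}$. The product $\prod_{j=0}^{2g}u_j$ on the left of (\ref{6766438}) is trivially invariant under this reflection. We expect this step to be the main obstacle, because it depends on the explicit shape of the recurrence. We would first try to show that $T^k_s(n)$ is a sum over admissible index patterns (products $u_{n+i_1}\cdots u_{n+i_r}$ with prescribed gaps) whose pattern set is closed under $i\mapsto (\text{const})-i$, since the gap conditions are symmetric. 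Failing that, we would derive a ``backward'' version of the recurrence and check that it coincides with the forward one after reflection.

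\textbf{Step 2: transfer to $t$.} Substitute (\ref{676678888768}) into (\ref{6766438}) with $n=0$. Then $\prod_{j=0}^{2g}u_j$ telescopes to $t_0t_{2g+3}/(t_1t_2\cdots)$-type ratios, and multiplying out gives (\ref{8878886500097}) in the form $t_0t_{2g+3}=L(t_1,\ldots,t_{2g+2})$; this is the reduction already established in \cite{Svinin7}. The right-hand side of (\ref{6766438}) expresses $L$ as a fixed monomial factor $M(t_1,\ldots,t_{2g+2})$ times $\sum_j\alpha_j T^j_{2g-j}(1)$ evaluated on the $u$'s. By the computation $\tilde u_k=u_{2g-k}$ above, reversing $t_1,\ldots,t_{2g+2}$ induces exactly the reflection of Step 1, so each $T^j_{2g-j}(1)$ is unchanged.

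\textbf{Step 3: the monomial factor.} It remains to check that $M$ is itself reversal-invariant. This is a bookkeeping of exponents: the telescoped product has numerator and denominator exponents that are symmetric about the centre of the window. As a sanity check, for $g=1$ and $g=2$ the claim is immediately visible in (\ref{5}) and (\ref{7}). For instance, in (\ref{7}) the middle fraction is mapped to itself under $t_k\mapsto t_{7-k}$, with $t_1t_4^2t_5\leftrightarrow t_6t_3^2t_2$ and $t_2^2t_5^2$ fixed.

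Alternatively, if Step 1 proves hard to establish directly, we could argue via time reversal of the recurrence. The Laurent polynomial $L$ is uniquely determined by the equation, and if $(t_n)$ solves $R_{2g+3}$ then so does $(t_{-n})$, because the $u$-equation is reversal invariant. Uniqueness of $L$ then forces $L(t_1,\ldots,t_{2g+2})=L(t_{2g+2},\ldots,t_1)$. However, this route still rests on the reversal invariance of (\ref{6766438}), which is the content of Step 1.
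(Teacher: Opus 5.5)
Your argument is correct. The paper does not prove this proposition itself; it refers to \cite{Svinin7}. The argument its own tools suggest works at the level of the polynomials $B^k_s(n)$. The two recurrences (\ref{654443287}) and (\ref{65445554337}) of Lemma \ref{76549076} are carried into each other by $t_{n+i}\mapsto t_{n+k+s+1-i}$, and the initial data $B^0_s(n)$, $B^k_k(n)$ are visibly symmetric. By induction each $B^k_s(n)$ is therefore reversal-invariant on its window, and dividing $\sum_j\alpha_jB^j_{2g-j}(1)$ by the symmetric product $\prod_{j=3}^{2g}t_j$ gives the claim.

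You instead work with $u$ and $T^k_s$. This makes the mechanism more transparent, since reversing the $t$'s is literally reversing the $u$'s ($\tilde u_k=u_{2g-k}$). The price is handling rational functions rather than polynomials. The two routes carry the same content, because $B^k_s(n)=T^k_s(n)\prod_{j=1}^{k+s}t_{n+j}$ and that product is itself reversal-symmetric.

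A few points would tighten your write-up:

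\begin{itemize}
\item Step 1 is not an obstacle. By (\ref{67999998}), $T^k_s(n)$ is the sum of $u_{n+i_1}\cdots u_{n+i_k}$ over subsets $\{i_1<\cdots<i_k\}\subset\{0,\ldots,s+k-2\}$ with consecutive gaps at least $2$. This index set is closed under $i\mapsto s+k-2-i$, and for $T^j_{2g-j}(1)$ that reflection is exactly $u_m\mapsto u_{2g-m}$.
\item Inducting on the forward recurrence (\ref{676678}) alone would be awkward, since it is not symmetric. If you want an induction, use the mirror pair (\ref{6540987})/(\ref{67776587}), and state the inductive claim for all $T^k_s(n)$, not just for $T^j_{2g-j}(1)$.
\item In Steps 2 and 3 the telescoping is explicit: $\prod_{j=0}^{2g}u_j=t_0t_{2g+3}/(t_2t_{2g+1})$. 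Hence $L=t_2t_{2g+1}\sum_j\alpha_jT^j_{2g-j}(1)$, so your monomial factor is $M=t_2t_{2g+1}$, which is obviously reversal-invariant.
\item Your ``alternative'' time-reversal argument is only a restatement of the same symmetry, so you can drop it.
\end{itemize}
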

The main property of $R_{2g+3}$ is expressed by the following theorem \cite{Svinin7}.
\begin{theorem}  \label{76888766}
The equation $R_{2g+3}$ for any $g\geq 1$ has the Laurent property. More precisely, for all $n\in\mathbb{Z}$, $t_n$ belongs to the ring $\mathbb{Z}[t_0^{\pm 1},\ldots, t_{2g+2}^{\pm 1}; \alpha_0,\ldots,\alpha_g]$.
\end{theorem}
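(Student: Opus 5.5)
We plan to prove the Laurent property with the standard ``Caterpillar''-type argument of Fomin and Zelevinsky (as used for Somos and Gale--Robinson recurrences), adapted to Laurent right-hand sides. Write the recurrence at $n=0$ as $t_0t_{2g+3}=L(t_1,\ldots,t_{2g+2})$, where $L$ is a Laurent polynomial whose denominator is the monomial $\prod_{j=3}^{2g}t_j$. Let $\mathcal{R}=\mathbb{Z}[t_0^{\pm1},\ldots,t_{2g+2}^{\pm1};\alpha]$. By the reversal symmetry of Proposition~\ref{004432767}, the backward recurrence has the same form as the forward one. So it suffices to treat $n\geq 2g+3$, arguing by induction on $n$.

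The inductive scheme is the following. Suppose $t_1,\ldots,t_{n-1}\in\mathcal{R}$. Then $t_n=L(t_{n-2g-2},\ldots,t_{n-1})/t_{n-2g-3}$ lies in $\mathcal{R}[t_{n-2g-3}^{-1}]$ and in the localized rings coming from the denominators $t_{n-2g-3+j}$, $3\le j\le 2g$.
\begin{itemize}
\item First, we show that the initial window can be shifted. Since $(t_1,\ldots,t_{2g+3})$ is related to $(t_0,\ldots,t_{2g+2})$ by a birational map, $t_n$ is also a Laurent polynomial in $t_1,\ldots,t_{2g+3}$; this is the induction hypothesis applied to the shifted sequence. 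Substituting $t_{2g+3}=L/t_0$ shows that $t_n\in\mathcal{R}[L^{-1}]$, with $L=L(t_1,\ldots,t_{2g+2})$ viewed in $\mathcal{R}$.
\item Second, we need a coprimality statement: $t_n$ belongs to the intersection of localizations of $\mathcal{R}$ at the possible bad denominators. Since $\mathcal{R}$ is a UFD, the intersection $\mathcal{R}[t_{k}^{-1}]\cap\mathcal{R}[L^{-1}]$ equals $\mathcal{R}$ provided the relevant elements are pairwise coprime.
\end{itemize}

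The needed coprimality is that the numerator polynomial $P=\sum_j\alpha_jB^j_{2g-j}(1)$ is coprime in $\mathcal{R}$ to the iterates $t_{2g+3},t_{2g+4},\ldots$ that appear as denominators. Following Fomin--Zelevinsky, we reduce this to checking that $t_{2g+4},\ldots,t_{4g+5}$ (the first $2g+3$ new terms) are Laurent and pairwise coprime with $P$ and its shifts. Because the $\alpha_j$ are free parameters, we can specialize: we would use Proposition~\ref{0027}, the reduction to Gale--Robinson of type $(2g+3,1,2)$. Coprimality (irreducibility, non-divisibility) under a specialization of parameters implies the generic statement, provided the degrees are preserved. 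The Gale--Robinson case is covered by Fomin--Zelevinsky.

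An alternative route for the first few steps uses the $u$-form (\ref{6766438}) together with the substitution (\ref{676678888768}). The $u$-recurrence is polynomial on the right-hand side, so the denominators of the $t$-terms are controlled by products of $u$'s. This explains why only $t_3,\ldots,t_{2g}$ appear in the denominator of $L$.

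The main obstacle is the explicit verification that $t_{2g+3},\ldots,t_{4g+5}$ are Laurent, since $L$ itself has denominators. One must show that, after substituting $t_{2g+3}=L/t_0$ into the recurrence for $t_{2g+4}$, the spurious factors of $L$ and of $t_3\cdots t_{2g}$ cancel. This requires an identity for the polynomials $B^k_s$, presumably derived from their defining recurrence in the next section and from the structure of $T^k_s$ under (\ref{676678888768}). We would attempt to prove that identity by induction on $g$, using the recursive definition of $B^k_s$.
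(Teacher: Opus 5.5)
The paper gives no proof of Theorem~\ref{76888766}; it quotes the result from \cite{Svinin7}, so there is no in-paper argument to match. Judged on its own, your proposal has a genuine gap, and it sits exactly where $R_{2g+3}$ differs from a Gale--Robinson recurrence.

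\textbf{The coprimality claim is false, and the failure matters.} You claim $P$ is coprime to $t_{2g+3},t_{2g+4},\ldots$. But $t_{2g+3}=P/(t_0t_3\cdots t_{2g})$ is an associate of $P$ in $\mathcal{R}$. The right-hand side carries the extra denominator $t_{n-2g}\cdots t_{n-3}$. So the forward formula puts $t_{2g+3}$ among the denominators of $t_n$ for every $n$ with $2g+6\le n\le 4g+3$. It does so again at $n=4g+6$, through $t_{n-2g-3}$; your finite list $t_{2g+4},\ldots,t_{4g+5}$ misses this index. For $R_7$ the affected terms are already $t_{10}$, $t_{11}$, $t_{14}$. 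For these $n$, the intersection $\mathcal{R}[P^{-1}]\cap\mathcal{R}[(t_{n-2g-3}t_{n-2g}\cdots t_{n-3})^{-1}]$ contains $\mathcal{R}[t_{2g+3}^{-1}]$. Your one-shift argument therefore says nothing about whether the $t_{2g+3}$-denominators cancel. There are $2g-1$ such indices, and the theorem is claimed for every $g$, so no finite computation settles them. A general cancellation identity for the $B^k_s$ is required. You leave exactly that identity unformulated (``presumably derived \ldots, we would attempt \ldots''), and it is the real content of the theorem.

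\textbf{The tools you invoke do not supply it.} Fomin--Zelevinsky reduce the infinitely many conditions ``$P$ coprime to $t_m$ for all $m$'' to finitely many local checks via the Caterpillar lemma. That lemma is stated for polynomial exchange relations. $R_{2g+3}$ is not of that form, and the paper explicitly says no cluster or Lam--Pylyavskyy interpretation of it is known. So ``following FZ'' would mean proving a new version of the lemma that handles the extra denominators.

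The Gale--Robinson specialization $\alpha_1=\cdots=\alpha_{g-1}=0$ of Proposition~\ref{0027} removes precisely the terms that carry those denominators. At best it could certify $P\nmid t_m$ for $m\ne 2g+3$. That statement does transfer back: $P$ is linear in the $\alpha_j$ with the unit coefficient $B^0_{2g}(1)$, hence irreducible in $\mathcal{R}$. Even so, you would need a non-divisibility fact for Gale--Robinson beyond the bare Laurent property. And the specialization is silent on the cancellations that matter. Your remark about the $u$-form is a heuristic, not a bound on the denominators of $t_n$ in $t_0,\ldots,t_{2g+2}$.

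As it stands, the outline moves the entire difficulty into the unproved ``main obstacle''.
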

Another property of the equation $R_{2g+3}$, as shown in \cite{Svinin7}, is that it can be embedded into a discrete Mumford dynamical system defined in \cite{Hone2}. This dynamical system is defined on a phase space of dimension $3g+1$. As shown in \cite{Hone2}, it is Liouville integrable. This system is defined by a Lax representation from which, in principle, a certain finite set of invariants can be derived. However, extracting explicit expressions for the invariants from the Lax representation is technically difficult.
The purpose of this paper is to construct, for each equation $R_{2g+3}$, a finite set of independent invariants $\left(H_1,\ldots, H_g\right)$ in explicit form, independently of the Lax representation.

In the next section we show how, precisely, our equations $R_{2g+3}$ are defined. For this we essentially need to define all the necessary discrete polynomials $B^k_s(n)$. Section \ref{87650987} is devoted to constructing invariants for the equations $R_{2g+3}$.
In Section \ref{765498} we discuss a relation conjecturally equivalent to the associated equation (\ref{6766438}). For small $g$, this equivalence is verified directly. The equivalent equation contains $\left(H_1,\ldots, H_g\right)$ and $\alpha_g$ as parameters. The interest in it is that by substituting (\ref{676678888768}) into this equation we obtain a relation equivalent to $R_{2g+3}$ which has its own peculiarities from the viewpoint of the Laurent property.
The proven statements regarding an analogue of the Laurent property allow us to present more general criteria for integrality of $R_{2g+3}$ sequences than the one given by the ordinary Laurent property. We discuss in detail the cases $g=1$ and $g=2$.
\section{Discrete polynomials $T^k_s(n)$ and $B^k_s(n)$}

\subsection{Discrete polynomials $T^k_s(n)$}

To define our equations $R_{2g+3}$, we need to define the discrete polynomials appearing in the right-hand side of (\ref{8878886500097}), and for this, in turn, we need to define the discrete polynomials $T^k_s(n)$. In a somewhat modified form, these polynomials and their generalizations were used in \cite{Svinin2}. From a practical point of view, a good definition of $T^k_s(n)$ is the recurrence relation:
\begin{equation}
T^k_s(n)=\sum_{j=0}^{s-k} u_{n+j} T^{k-1}_{s-j-1}(n+j+2),\;\; \forall s\geq k.
\label{676678}
\end{equation}
For example, we can start with $T^0_s(n)=1$ and compute the polynomials $T^k_s(n)$ step by step for increasing $k$, but we can also write $T^k_s(n)$ in explicit form:
\begin{equation}
T^k_s(n)=\sum_{0\leq \lambda_1<\cdots< \lambda_k\leq s-1} \prod_{j=1}^k u_{n+\lambda_j+j-1}.
\label{67999998}
\end{equation}
Expression (\ref{67999998}) is a summation over a finite set of admissible values $\lambda_j$. Denote $D_{k, s}=\left\{\lambda_j : 0\leq \lambda_1<\cdots< \lambda_k\leq s-1\right\}$. Note that for $s=1,\ldots, k-1$ we have $D_{k, s}=\varnothing$, and consequently $T^k_s(n)=0$ for these values of $s$. On the other hand, if $s=k$, then the set $D_{k, s}$ consists of one element $\left\{\lambda_j=j,\;\; j=0,\ldots, k-1\right\}$ and thus we obtain
\begin{equation}
T^k_k(n)=\prod_{j=0}^{k-1}u_{n+2j},\;\; \forall k\geq 1.
\label{6799999000988}
\end{equation}

A disadvantage of the explicit formula (\ref{67999998}) is that it is rather difficult to use in practice compared to the recurrence relation (\ref{676678}). Nevertheless, some useful facts can be extracted from it. Therefore it is better to use these definitions together. From formula (\ref{67999998}) we see that $T^k_s(n)$ is a homogeneous polynomial in the variables $\left(u_n,\ldots, u_{n+k+s-2}\right)$ of degree $k$.

The first thing to show is how to pass from the associated equation (\ref{6766438}) to the relation (\ref{8878886500097}). For this we need to define the map $T^k_s(n)\mapsto B^k_s(n)$. Write
\begin{equation}
B^k_s(n)=T^k_s(n)\prod_{j=1}^{k+s} t_{n+j},\;\; \forall s\geq k,\; k\geq 1,
\label{65400000987}
\end{equation}
provided that we have made the substitution (\ref{676678888768}) in $T^k_s(n)$. But we must be sure that $B^k_s(n)$ are indeed discrete polynomials in $t_n$. The following lemma is useful for proving this fact.
\begin{lemma} \cite{Svinin4}    \label{65439876} 
The two relations
\begin{eqnarray}
T^k_s(n)&=&T^k_{s-1}(n)+u_{n+s+k-2}T^{k-1}_{s-1}(n) \label{6540987} \\
&=&T^k_{s-1}(n+1)+u_nT^{k-1}_{s-1}(n+2)
\label{67776587}
\end{eqnarray}
are identities.
\end{lemma}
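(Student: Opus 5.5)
Both identities follow from the explicit formula (\ref{67999998}) by splitting the index set $D_{k,s}$ according to the extreme value of $\lambda$. Throughout, the conventions are $T^0_s(n)=1$ and $T^k_s(n)=0$ for $s<k$. In that range $D_{k,s}=\varnothing$, so the degenerate cases of the identities hold trivially. (An induction on $k$ using (\ref{676678}) would also work, but the explicit formula is more transparent.)

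For (\ref{6540987}), I would split $D_{k,s}$ by whether $\lambda_k\leq s-2$ or $\lambda_k=s-1$.
\begin{itemize}
\item The tuples with $\lambda_k\leq s-2$ are exactly $D_{k,s-1}$, and their contribution is $T^k_{s-1}(n)$.
\item If $\lambda_k=s-1$, the last factor is $u_{n+(s-1)+k-1}=u_{n+s+k-2}$. The remaining indices satisfy $0\leq\lambda_1<\cdots<\lambda_{k-1}\leq s-2$, i.e. they range over $D_{k-1,s-1}$. The factors $u_{n+\lambda_j+j-1}$ for $j\leq k-1$ are unchanged, so this part equals $u_{n+s+k-2}T^{k-1}_{s-1}(n)$.
\end{itemize}

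For (\ref{67776587}), I would split instead by whether $\lambda_1\geq 1$ or $\lambda_1=0$.
\begin{itemize}
\item If $\lambda_1\geq1$, set $\mu_j=\lambda_j-1$. Then $0\leq\mu_1<\cdots<\mu_k\leq s-2$, and $u_{n+\lambda_j+j-1}=u_{(n+1)+\mu_j+j-1}$. This part is $T^k_{s-1}(n+1)$.
\item If $\lambda_1=0$, the first factor is $u_n$. For $j\geq 2$, set $\mu_{j-1}=\lambda_j-1$, so that $0\leq\mu_1<\cdots<\mu_{k-1}\leq s-2$. The index check is $u_{n+\lambda_j+j-1}=u_{n+\mu_{j-1}+1+(j-1)-1+1}=u_{(n+2)+\mu_{j-1}+(j-1)-1}$. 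This part is $u_nT^{k-1}_{s-1}(n+2)$.
\end{itemize}

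The only real obstacle is this bookkeeping of the shift in the index $n+\lambda_j+j-1$ after reindexing, especially the net shift of $2$ in the second identity; everything else is immediate. As a consistency check, I would verify that (\ref{67776587}) with the $u_n$-term kept and the sum over the position of the first factor iterated reproduces the defining recurrence (\ref{676678}).
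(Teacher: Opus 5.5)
Your proof is correct and follows the paper's argument: each identity comes from splitting $D_{k,s}$ into two disjoint pieces according to an extreme index, and your reindexing checks out, including the net shift $n\mapsto n+2$. You also match the splits to the identities correctly: the $\lambda_k$ split gives (\ref{6540987}) and the $\lambda_1$ split gives (\ref{67776587}), whereas the paper's proof attaches the $\lambda_1=0$ versus $\lambda_1\geq 1$ partition to (\ref{6540987}), which is a mislabel.
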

\begin{proof}
Relations (\ref{6540987}) and (\ref{67776587}) arise from a suitable partition of the set $D_{k, s}$ into two disjoint subsets. For instance, one can check that (\ref{6540987}) corresponds to the partition $D_{k, s}=D^{(1)}_{k, s}\sqcup D^{(2)}_{k, s}$ with
\[
D^{(1)}_{k, s}=\left\{\lambda_j : \lambda_1=0,\;\; 1\leq \lambda_2<\cdots< \lambda_k\leq s-1\right\}
\]
and 
\[
D^{(2)}_{k, s}=\left\{\lambda_j : 1\leq \lambda_1<\cdots< \lambda_k\leq s-1\right\}.
\]
\end{proof}
\begin{remark}
Note that (\ref{676678}) can be obtained by successive application of (\ref{67776587}). This implies the equivalence of the two different definitions (\ref{676678}) and (\ref{67999998}) for the polynomials $T^k_s(n)$.
\end{remark}
From Lemma \ref{65439876} together with (\ref{65400000987}) we obtain the following statement.
\begin{lemma} \label{76549076}
The discrete polynomials $B^k_s(n)$ can be computed using one of the following two recurrence relations:
\begin{eqnarray}
B^k_s(n)&=&t_{n+k+s}B^k_{s-1}(n)+t_{n+k+s-2}t_{n+k+s+1}B^{k-1}_{s-1}(n) \label{654443287} \\
&=&t_{n+1}B^k_{s-1}(n+1)+t_nt_{n+3}B^{k-1}_{s-1}(n+2).
\label{65445554337}
\end{eqnarray}
\end{lemma}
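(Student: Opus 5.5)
The plan is to multiply the identities of Lemma \ref{65439876} by the product $\prod_{j=1}^{k+s} t_{n+j}$ and to rewrite each term via the definition (\ref{65400000987}). The substitution (\ref{676678888768}) makes each factor $u_m$ a ratio of four $t$'s, so each step reduces to checking that the leftover $t$-factors telescope into the monomial coefficients claimed in (\ref{654443287}) and (\ref{65445554337}).

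For (\ref{654443287}), start from (\ref{6540987}) and multiply through by $P:=\prod_{j=1}^{k+s} t_{n+j}$.

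\textbf{First term.} We have $T^k_{s-1}(n)P = t_{n+k+s}\,B^k_{s-1}(n)$, because $B^k_{s-1}(n)$ carries the product $\prod_{j=1}^{k+s-1}t_{n+j}$.

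\textbf{Second term.} $B^{k-1}_{s-1}(n)$ carries $\prod_{j=1}^{k+s-2}t_{n+j}$. Hence
\[
u_{n+s+k-2}T^{k-1}_{s-1}(n)P=\frac{t_{n+k+s-2}t_{n+k+s+1}}{t_{n+k+s-1}t_{n+k+s}}\,t_{n+k+s-1}t_{n+k+s}\,B^{k-1}_{s-1}(n),
\]
and the denominators cancel, leaving exactly $t_{n+k+s-2}t_{n+k+s+1}B^{k-1}_{s-1}(n)$.

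For (\ref{65445554337}), start from (\ref{67776587}) and multiply by the same $P$.

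\textbf{First term.} $B^k_{s-1}(n+1)$ carries $\prod_{j=2}^{k+s}t_{n+j}$, so $T^k_{s-1}(n+1)P=t_{n+1}B^k_{s-1}(n+1)$.

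\textbf{Second term.} $B^{k-1}_{s-1}(n+2)$ carries $\prod_{j=3}^{k+s}t_{n+j}$. Therefore
\[
u_nT^{k-1}_{s-1}(n+2)P=\frac{t_nt_{n+3}}{t_{n+1}t_{n+2}}\,t_{n+1}t_{n+2}\,B^{k-1}_{s-1}(n+2)=t_nt_{n+3}B^{k-1}_{s-1}(n+2).
\]

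The only real care needed is with the boundary conventions, which is where I would be most careful.

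\textbf{Case $k-1=0$.} Here (\ref{65400000987}) is stated only for $k\ge1$. Since $T^0_s=1$, the natural convention is $B^0_{s}(n)=\prod_{j=1}^{s}t_{n+j}$, which is consistent with the multiplication above.

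\textbf{Case $s-1<k$.} Here $T^k_{s-1}=0$, so I would set $B^k_{s-1}=0$, matching the remark that $T^k_s=0$ for $s<k$.

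With these conventions both recurrences hold for all $s\ge k\ge1$. Starting from $B^0_s$ and $B^k_{k-1}=0$, which are polynomials, induction on $k+s$ then shows that every $B^k_s(n)$ is a polynomial in the $t$'s, which is the point of the lemma.
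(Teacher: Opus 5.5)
Your proof is correct and is the paper's own argument. The paper states the lemma as an immediate consequence of multiplying the identities of Lemma \ref{65439876} by $\prod_{j=1}^{k+s}t_{n+j}$ and applying (\ref{65400000987}); your term-by-term cancellations spell this out, and your boundary conventions $B^k_{k-1}=0$, $B^0_0=1$ are consistent with the paper's explicit initial data $B^k_k$ and $B^0_s$.
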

However, for these recurrence relations to work, additional data are needed. Using (\ref{6799999000988}), we can write:
\[
B^k_k(n)=\prod_{j=0}^{k-1} t_{n+2j} \prod_{j=0}^{k-1} t_{n+2j+3},\;\; \forall k\geq 1.
\]
This polynomial can be computed without using any recurrence relations. We also need to define
\[
B^0_s(n)=\prod_{j=1}^s t_{n+j}, \forall s\geq 1.
\] 
Note that this is consistent with (\ref{65400000987}) provided $T^0_s(n)=1$. Now we have all the data needed to use relation (\ref{654443287}) (or (\ref{65445554337})) for computing the polynomials $B^k_s(n)$, first for $k=1$, then for $k=2$, and so on.

Clearly, $\deg\left(B^0_s\right)=s$ and $\deg\left(B^k_k\right)=2k$. Moreover, the two terms on the right-hand sides of (\ref{654443287}) and (\ref{65445554337}) have the same degree, equal to $k+s$. All this proves that $B^k_s(n)$, computed by (\ref{654443287}) (or (\ref{65445554337})), is indeed a homogeneous polynomial of degree $k+s$ in the variables $\left(t_n,\ldots, t_{n+k+s+1}\right)$.

As a result of substituting (\ref{676678888768}) into the associated recurrence relation (\ref{6766438}), taking into account (\ref{65400000987}), we obtain
\[
\prod_{j=3}^{2g} t_{n+j}\cdot t_nt_{n+2g+3}=\sum_{j=0}^g \alpha_j B^j_{2g-j}(n+1).
\]
Thus, we have fully defined our equations $R_{2g+3}$.
Since 
\[
B^0_{2g}(n+1)=\prod_{j=3}^{2g} t_{n+j}\cdot t_{n+2}t_{n+2g+1}\;\mbox{and}\;\; B^g_g(n+1)=\prod_{j=3}^{2g} t_{n+j}\cdot t_{n+1}t_{n+2g+2},
\]
we can write $R_{2g+3}$ in the form
\[
t_nt_{n+2g+3}=\alpha_0  t_{n+2}t_{n+2g+1}+\frac{\sum_{j=1}^{g-1}\alpha_j B^j_{2g-j}(n+1)}{\prod_{j=3}^{2g} t_{n+j}}+\alpha_g  t_{n+1}t_{n+2g+2}.
\]
This implies Proposition \ref{0027}.
\section{Invariants for the associated recurrence relation (\ref{6766438})}  \label{87650987}

The goal of this section is to construct a finite set of independent invariants for $R_{2g+3}$.

\subsection{Equivalent recurrence relation}

Next we wish to present a finite set of independent invariants for the associated equation (\ref{6766438}). However, for certain reasons, it is more convenient for us to deal with the discrete equation
\begin{equation}
u_{n+2g+1}=u_n\frac{\hat{T}^g_g(n+2)}{\hat{T}^g_g(n+1)},
\label{87654999}
\end{equation}
which, as we will show, is equivalent to (\ref{6766438}).
Here we use the notation
\begin{equation}
\hat{T}^s_s(n)=\sum_{j=0}^s c_j T^{s-j}_{s+j}(n),\;\;\; \forall s\geq 1,
\label{56477777399872}
\end{equation}
where it is assumed that $\left(c_1,\ldots, c_g\right)$ is an arbitrary set of parameters, while $c_0=1$.
\begin{remark}
The one-parameter class of equations of the form (\ref{87654999}) represents a subset of the two-parameter class of equations of the form
\begin{equation}
u_{n+k+s}=u_n\frac{\hat{T}^k_{s-1}(n+2)}{\hat{T}^k_{s-1}(n+1)},\;\; k\geq 1,\;\; s\geq k+1,
\label{56477872}
\end{equation}
where
\[
\hat{T}^k_s(n)=\sum_{j=0}^k c_j T^{k-j}_{s+j}(n),\;\;\; \forall s\geq 1.
\]
In \cite{Svinin3} it was shown that relations of the form (\ref{56477872}) play the role of constraints compatible with the Volterra chain and its higher symmetries.
\end{remark}
\begin{lemma}
Equation (\ref{87654999}) is equivalent to the associated recurrence relation (\ref{6766438}).
\end{lemma}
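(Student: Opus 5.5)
We have to show that (\ref{87654999}) and (\ref{6766438}) are equivalent. We read "equivalent" in the usual sense for such recurrences: the solutions of (\ref{87654999}) for all values of $(c_1,\ldots,c_g)$ are exactly the solutions of (\ref{6766438}) for all values of $(\alpha_0,\ldots,\alpha_g)$. The parameters must be related, since (\ref{87654999}) carries $g$ parameters and (\ref{6766438}) carries $g+1$; one of the $\alpha_j$ (presumably $\alpha_g$) will appear as an integration constant.

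\emph{Step 1: telescoping form.} Rewrite (\ref{87654999}) as $u_{n+2g+1}\hat{T}^g_g(n+1)=u_n\hat{T}^g_g(n+2)$. Multiplying by $\prod_{j=1}^{2g}u_{n+j}$ turns it into
\[
\Phi(n+1)=\Phi(n),\qquad \Phi(n):=\frac{\hat{T}^g_g(n+1)}{\prod_{j=0}^{2g}u_{n+j}}.
\]
So (\ref{87654999}) says exactly that $\Phi(n)$ is a first integral, $\Phi(n)=\kappa$ for some constant $\kappa$. Equivalently,
\[
\kappa\prod_{j=0}^{2g}u_{n+j}=\sum_{j=0}^{g}c_jT^{g-j}_{g+j}(n+1).
\]
Here we use the non-degeneracy assumption $u_n\neq0$, which is implicit in (\ref{676678888768}).

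\emph{Step 2: matching parameters.} This relation has the form of (\ref{6766438}) after reindexing. In (\ref{6766438}) the term $\alpha_jT^j_{2g-j}(n+1)$ has upper index $j$ and lower index $2g-j$, while here $c_jT^{g-j}_{g+j}(n+1)$ has upper index $g-j$ and lower index $g+j$. Setting $i=g-j$ gives $T^{i}_{2g-i}(n+1)$, so the indices agree. The parameters are then identified by $\alpha_{g-j}=c_j/\kappa$, so that $\alpha_g=1/\kappa$ and $\alpha_{g-j}=c_j\alpha_g$.

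\emph{Step 3: both directions.}
\begin{itemize}
\item From (\ref{87654999}) to (\ref{6766438}): by Step 1, any solution of (\ref{87654999}) satisfies (\ref{6766438}) with the $\alpha$'s obtained from $\kappa$ and the $c$'s as in Step 2.
\item From (\ref{6766438}) to (\ref{87654999}): take a solution of (\ref{6766438}) with $\alpha_g\neq0$ and set $c_j=\alpha_{g-j}/\alpha_g$. Then $\hat{T}^g_g(n+1)=\alpha_g^{-1}\prod_{j=0}^{2g}u_{n+j}$ for all $n$. Dividing the relation at $n+1$ by the relation at $n$ gives $\hat{T}^g_g(n+2)/\hat{T}^g_g(n+1)=u_{n+2g+1}/u_n$, which is (\ref{87654999}).
\end{itemize}

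\emph{Main obstacle.} There is no real computation here; the work is the bookkeeping. One must check that the terms $T^{g-j}_{g+j}(n+1)$ for $j=0,\ldots,g$ are exactly the $T^i_{2g-i}(n+1)$ appearing in (\ref{6766438}), and handle the degenerate case $\alpha_g=0$. That case should be excluded, or treated as the limit $\kappa\to\infty$. One must also state clearly that equivalence means the solution sets coincide, with $\kappa=\alpha_g^{-1}$ an integration constant, so that (\ref{87654999}) is a "once-differentiated" version of (\ref{6766438}). This shows that $\kappa$, and more generally combinations of $c_j$, act as invariants, which presumably motivates the construction of the invariants that follows.
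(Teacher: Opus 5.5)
Your proof is correct and follows the paper's route: your $\Phi(n)$ is the paper's invariant $K$, your identification $\alpha_{g-j}=c_j/\kappa$ is the paper's $\alpha_j=c_{g-j}/K$, and your explicit converse (set $c_j=\alpha_{g-j}/\alpha_g$ with $\alpha_g\neq0$, then divide the relation at $n+1$ by the one at $n$) fills in the step the paper only asserts and makes its parameter-count heuristic unnecessary. One small slip: to pass from $u_{n+2g+1}\hat{T}^g_g(n+1)=u_n\hat{T}^g_g(n+2)$ to $\Phi(n+1)=\Phi(n)$ you divide by $\prod_{j=0}^{2g+1}u_{n+j}$, not multiply by $\prod_{j=1}^{2g}u_{n+j}$.
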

\begin{proof}
The number of parameters $\left(u_0,\ldots, u_{2g-1}; \alpha_0,\ldots, \alpha_g\right)$ needed to uniquely determine a sequence $\left(u_n\right)_{n\in \mathbb{Z}}$ by means of (\ref{6766438}) is $3g+1$. At the same time, the number of parameters $\left(u_0,\ldots, u_{2g}; c_1,\ldots, c_g\right)$ needed to uniquely determine a sequence defined by (\ref{87654999}) is also $3g+1$, which is a necessary condition for the equivalence of (\ref{6766438}) and (\ref{87654999}).
Furthermore, it is easy to see that
\[
K=\frac{\hat{T}^g_g(n+1)}{\prod_{j=0}^{2g} u_{n+j}}
\]
is an invariant for (\ref{87654999}). This relation, rewritten as
\begin{equation}
\prod_{j=0}^{2g} u_{n+j}=\frac{\hat{T}^g_g(n+1)}{K},
\label{8832}
\end{equation}
where $K$ remains an arbitrary constant, actually coincides with (\ref{6766438}). Indeed, comparing (\ref{8832}) with (\ref{6766438}) we obtain
\[
\alpha_j=\frac{c_{g-j}}{K},\;\;  j=0,\ldots, g.
\]
On the other hand, from (\ref{8832}) one can easily derive (\ref{6766438}) as a consequence.
\end{proof}
Thus, any $R_{2g+3}$-sequence gives some solution of equation (\ref{87654999}) with suitable parameters $\left(c_1,\ldots, c_g\right)$.

\subsection{Discrete polynomials $P^k_s(n)$}

Our immediate goal is to show how to construct a finite set of independent invariants for equation (\ref{87654999}), but, although this may not be obvious, we need to construct another class of discrete polynomials closely related to $T^k_s(n)$.

Recall that the explicit formula (\ref{67999998}) for the polynomials $T^k_s(n)$ is a sum of monomials over a finite set of index values $D_{k, s}=\left\{\lambda_j : 0\leq \lambda_1<\cdots< \lambda_k\leq s-1\right\}$. Consider the partition $D_{k, s}=\bar{D}_{k, s}\sqcup\tilde{D}_{k, s}$, where
\[
\bar{D}_{k, s}=\left\{\lambda_j : \lambda_1=0,\; \lambda_k=s-1,\; 1\leq \lambda_2<\cdots< \lambda_{k-1}\leq s-2\right\},
\]
and $\tilde{D}_{k, s}$ is its complement. Then the definition of the discrete polynomial $P^k_s(n)$ is as follows. The formula is the same as for $T^k_s(n)$, i.e. (\ref{67999998}), only the set $D_{k, s}$ is replaced by its subset $\tilde{D}_{k, s}$. In other words, in the summation we need to remove from $D_{k, s}$ all elements where $\lambda_1=0$ and $\lambda_k=s-1$. This requirement leads to the relation
\begin{equation}
P^k_s(n)=T^k_s(n)-u_nu_{n+k+s-2}T^{k-2}_{s-2}(n+2),\;\;\; \forall s\geq k\geq 2.
\label{77654987}
\end{equation}
Note that, since $\tilde{D}_{k, s}=\varnothing$ for $s=1,\ldots, k$, in this case we have $P^k_s(n)=0$.

The following lemma follows from two possible partitions of the set $\tilde{D}_{k, s}$, namely:
\begin{eqnarray*}
\tilde{D}_{k, s}&=&\left\{\lambda_j : 1\leq \lambda_1<\cdots< \lambda_{k}\leq s-1\right\} \\
&& \sqcup \left\{\lambda_j : \lambda_1=0,\;\; 1\leq \lambda_2<\cdots< \lambda_{k}\leq s-2\right\}
\end{eqnarray*}
and
\begin{eqnarray*}
\tilde{D}_{k, s}&=&\left\{\lambda_j : 0\leq \lambda_1<\cdots< \lambda_{k}\leq s-2\right\} \\
&& \sqcup \left\{\lambda_j : \lambda_k=s-1,\;\; 1\leq \lambda_1<\cdots< \lambda_{k-1}\leq s-2\right\}.
\end{eqnarray*}
\begin{lemma} 
The following two relations
\begin{eqnarray}
P^k_s(n)&=&T^k_{s-1}(n+1)+u_n T^{k-1}_{s-2}(n+2) \label{64265} \\
&=&T^k_{s-1}(n)+u_{n+k+s-2} T^{k-1}_{s-2}(n+1),
\label{642}
\end{eqnarray}
for all $s\geq 1$, are identities.
\end{lemma}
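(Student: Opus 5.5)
The plan is to prove both identities by the bijective method already used for Lemma \ref{65439876}. I will take the two displayed partitions of $\tilde{D}_{k, s}$ and show that the monomials over each part sum to one of the two terms on the right-hand sides of (\ref{64265}) and (\ref{642}). Throughout I read off monomials from (\ref{67999998}): a tuple $\lambda\in D_{k,s}$ contributes $\prod_{j=1}^k u_{n+\lambda_j+j-1}$.

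For (\ref{64265}) I use the first partition.

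\textbf{Part $\lambda_1\geq 1$.} Set $\mu_j=\lambda_j-1$. This is a bijection onto $\{0\leq \mu_1<\cdots<\mu_k\leq s-2\}=D_{k,s-1}$, and $u_{n+\lambda_j+j-1}=u_{(n+1)+\mu_j+j-1}$. So this part gives $T^k_{s-1}(n+1)$.

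\textbf{Part $\lambda_1=0$, $\lambda_k\leq s-2$.} The first factor is $u_n$. For the remaining $k-1$ indices set $\mu_i=\lambda_{i+1}-1$ for $i=1,\ldots,k-1$, so that $0\leq\mu_1<\cdots<\mu_{k-1}\leq s-3$; this is exactly $D_{k-1,s-2}$. The exponent becomes $n+\lambda_{i+1}+i=(n+2)+\mu_i+i-1$. So this part gives $u_nT^{k-1}_{s-2}(n+2)$.

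For (\ref{642}) I use the second partition.

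\textbf{Part $\lambda_k\leq s-2$.} This is $D_{k,s-1}$ with unchanged indices, so it gives $T^k_{s-1}(n)$.

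\textbf{Part $\lambda_k=s-1$, $\lambda_1\geq 1$.} The last factor is $u_{n+s-1+k-1}=u_{n+k+s-2}$. For the first $k-1$ indices set $\mu_j=\lambda_j-1$, so that $0\leq \mu_1<\cdots<\mu_{k-1}\leq s-3$, again $D_{k-1,s-2}$. The exponent is $n+\lambda_j+j-1=(n+1)+\mu_j+j-1$. So this part gives $u_{n+k+s-2}T^{k-1}_{s-2}(n+1)$.

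Before relying on the partitions I will check that they really partition $\tilde D_{k,s}$. For the first one: an element of $\tilde{D}_{k,s}$ either has $\lambda_1\geq1$, or it has $\lambda_1=0$, in which case it must have $\lambda_k\neq s-1$. The second is symmetric. The case $\lambda_1=0$ with $\lambda_k\leq s-2$ also forces $\lambda_2\geq1$, which is automatic.

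An alternative check is to start from (\ref{77654987}), expand $T^k_s(n)$ by (\ref{67776587}), and then apply (\ref{6540987}) to $T^{k-1}_{s-1}(n+2)$. This gives $T^{k-1}_{s-1}(n+2)=T^{k-1}_{s-2}(n+2)+u_{n+k+s-2}T^{k-2}_{s-2}(n+2)$. The last term cancels the subtracted term, which yields (\ref{64265}). The symmetric manipulation gives (\ref{642}).

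The only real obstacle is the boundary and degenerate cases. For $k=1$, $\bar D_{1,s}$ requires $\lambda_1=0=s-1$, and one must interpret $T^0=1$ and $T^{-1}=0$. For small $s$, with $s\leq k$, both sides should vanish. I would check these directly using the conventions $T^k_s=0$ for $s<k$ and $T^0_s=1$, noting that the stated range "$s\geq1$" must be read with these conventions.
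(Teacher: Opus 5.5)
Your proof is correct and is essentially the paper's argument: the paper derives both identities from exactly these two partitions of $\tilde{D}_{k,s}$, and you have written out the index shifts $\mu_j=\lambda_j-1$ and $\mu_i=\lambda_{i+1}-1$ that the paper leaves implicit, with the monomials reindexed correctly in all four parts. Your alternative derivation via (\ref{77654987}), (\ref{67776587}) and (\ref{6540987}) is also valid for $k\geq 2$, and your boundary cases ($s\leq k$, and $k=1$) are handled correctly under the convention $T^k_s=0$ for $s<k$.
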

\begin{remark}
In formulas (\ref{64265}) and (\ref{642}) we should assume $T^k_s(n)=0,\;\; \forall s<k$. If $s=k$, then taking (\ref{6799999000988}) into account, both these relations give
\[
P^k_{k+1}(n)=\prod_{j=0}^{k-1}u_{n+2j} + \prod_{j=0}^{k-1}u_{n+2j+1},\;\; \forall k\geq 1.
\]
\end{remark}

\subsection{Invariants for equations of the form (\ref{87654999})}

Next we need the following technical statement.
\begin{lemma} \label{78654}
The relation
\begin{equation}
u_n\hat{T}^s_s(n+2)-u_{n+2s+1}\hat{T}^s_s(n+1)=\hat{T}^{s+1}_{s+1}(n)-\hat{T}^{s+1}_{s+1}(n+1)
\label{87677549}
\end{equation}
is an identity.
\end{lemma}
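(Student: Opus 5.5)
The plan is to reduce the identity (\ref{87677549}) to a termwise statement about the polynomials $T^k_m(n)$ and then verify that statement using the two recurrences of Lemma \ref{65439876}.

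First I expand both sides using the definition (\ref{56477777399872}). On the left we get $\sum_{j=0}^{s} c_j\bigl(u_nT^{s-j}_{s+j}(n+2)-u_{n+2s+1}T^{s-j}_{s+j}(n+1)\bigr)$. On the right we get $\sum_{j=0}^{s+1} c_j\bigl(T^{s+1-j}_{s+1+j}(n)-T^{s+1-j}_{s+1+j}(n+1)\bigr)$.

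The term $j=s+1$ on the right is $c_{s+1}\bigl(T^0_{2s+2}(n)-T^0_{2s+2}(n+1)\bigr)=0$, because $T^0_m=1$. Hence it suffices to prove, for each $j=0,\ldots,s$, the identity
\[
u_nT^k_m(n+2)-u_{n+m+k+1}T^k_m(n+1)=T^{k+1}_{m+1}(n)-T^{k+1}_{m+1}(n+1),
\]
where $k=s-j$ and $m=s+j$. With these values $m\geq k\geq 0$ and $m+k=2s$, so $u_{n+m+k+1}=u_{n+2s+1}$, which matches the coefficient on the left of (\ref{87677549}).

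To prove this termwise identity I compute $T^{k+1}_{m+1}$ in two ways.
\begin{itemize}
\item Applying (\ref{67776587}) with $(k,s)\to(k+1,m+1)$ at the point $n$ gives $T^{k+1}_{m+1}(n)=T^{k+1}_m(n+1)+u_nT^k_m(n+2)$.
\item Applying (\ref{6540987}) with $(k,s)\to(k+1,m+1)$ at the point $n+1$ gives $T^{k+1}_{m+1}(n+1)=T^{k+1}_m(n+1)+u_{n+1+(m+1)+(k+1)-2}T^k_m(n+1)$. The subscript here is $n+m+k+1$.
\end{itemize}
Subtracting the second relation from the first cancels the common term $T^{k+1}_m(n+1)$ and yields exactly the required identity. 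Summing over $j$ with weights $c_j$ then gives (\ref{87677549}).

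The only point needing care, and the main potential obstacle, is the boundary cases where Lemma \ref{65439876} is applied at the edge of its range.

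\emph{Case $m=k$, i.e.\ $j=0$.} Here $T^{k+1}_m=T^{k+1}_k=0$ by the convention $T^k_s=0$ for $s<k$. I will check this case directly from the explicit formula (\ref{67999998}) via the set partition used in the proof of Lemma \ref{65439876}: when $s=k$ the relevant subset of $D_{k+1,k+1}$ is empty, so both recurrences remain valid.

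\emph{Case $k=0$, i.e.\ $j=s$.} Here the identity reduces to $u_n-u_{n+m+1}=T^1_{m+1}(n)-T^1_{m+1}(n+1)$. This follows at once from $T^1_{m+1}(n)=\sum_{i=0}^{m}u_{n+i}$, a telescoping sum.

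Once these edge cases are confirmed, the proof is complete.
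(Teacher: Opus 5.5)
Your proposal is correct and follows the paper's proof. The paper likewise reads the relation as a generating identity in the $c_j$, reduces it to the termwise identity $u_nT^{k-1}_{s-1}(n+2)-u_{n+s+k-1}T^{k-1}_{s-1}(n+1)=T^k_s(n)-T^k_s(n+1)$, and derives that from the two recurrences of Lemma~\ref{65439876}. You simply make explicit what the paper leaves implicit: the vanishing $c_{s+1}$ term, which instances of the two recurrences to subtract, and the boundary cases $m=k$ and $k=0$.
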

\begin{proof}
Since $c_j$ are arbitrary parameters, this means that (\ref{87677549}) must be a generating relation for a finite set of identities of the form
\[
u_nT^{s-j}_{s+j}(n+2)-u_{n+2s+1}T^{s-j}_{s+j}(n+1)=T^{s-j+1}_{s+j+1}(n)-T^{s-j+1}_{s+j+1}(n+1),\;\;\; j=0,\ldots, s.
\]
But these, as is easy to see, are a special case of the identity
\[
u_nT^{k-1}_{s-1}(n+2)-u_{n+s+k-1}T^{k-1}_{s-1}(n+1)=T^k_s(n)-T^k_s(n+1)
\]
which, in turn, follows from identities (\ref{6540987}) and (\ref{67776587}).
\end{proof}

For further computations we will denote the invariant $K$ somewhat more elaborately, namely as $K_g(n)$. Clearly, the relation
\[
K_g(n+1)-K_g(n)=\frac{1}{\prod_{j=0}^{2g+1} u_{n+j}}\left(u_n\hat{T}^g_g(n+2)-u_{n+2g+1}\hat{T}^g_g(n+1)\right)
\]
holds. By Lemma \ref{78654}, we can rewrite this relation as
\begin{equation}
K_g(n+1)-K_g(n)=\frac{1}{\prod_{j=0}^{2g+1} u_{n+j}}\left(\hat{T}^{g+1}_{g+1}(n)-\hat{T}^{g+1}_{g+1}(n+1)\right).
\label{870009}
\end{equation}

Consider the following set of functions:
\begin{equation}
\tilde{K}_{g, s}(n)=\sum_{j=0}^s P^j_{2(g-s+1)+j}(n+s-j) K_{g-s+j}(n+s-j),\;\;\; s=1,\ldots, g,
\label{8787659}
\end{equation}
where\footnote{We set $\hat{T}^0_0(n)=1$.}
\[
K_s(n)=\frac{\hat{T}^s_s(n+1)}{\prod_{j=0}^{2s} u_{n+j}},\;\; \forall s\geq 0.
\]
Note that $\tilde{K}_{g, s}(n)$ is a Laurent polynomial in the arguments $\left(u_n,\ldots, u_{n+2g}\right)$ for all $s=1,\ldots, g$. Direct computation verifies the following statement.
\begin{lemma}
The recurrence relation
\begin{equation}
\tilde{K}_{g, s}(n)=P^s_{2g-s+2}(n) K_g(n)+\tilde{K}_{g-1, s-1}(n+1).
\label{87666549}
\end{equation}
holds.
\end{lemma}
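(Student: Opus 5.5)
This recurrence should follow purely from re-indexing the sum (\ref{8787659}), with no identities for $P$ or $T$ needed. The plan is to split off the term with $j=s$ and then recognize the remaining terms as $\tilde{K}_{g-1,s-1}(n+1)$.

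First, I isolate the top term. For $j=s$ the summand of (\ref{8787659}) is
\[
P^s_{2(g-s+1)+s}(n)K_{g}(n)=P^s_{2g-s+2}(n)K_g(n),
\]
which is exactly the first term on the right of (\ref{87666549}).

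Next, I handle the remaining terms $j=0,\ldots,s-1$, whose summand is $P^j_{2(g-s+1)+j}(n+s-j)K_{g-s+j}(n+s-j)$. Writing $g'=g-1$ and $s'=s-1$, we have $g-s=g'-s'$, so $2(g-s+1)+j=2(g'-s'+1)+j$ and $K_{g-s+j}=K_{g'-s'+j}$. For the shifts, $n+s-j=(n+1)+s'-j$. Hence
\[
\sum_{j=0}^{s-1}P^j_{2(g-s+1)+j}(n+s-j)K_{g-s+j}(n+s-j)=\sum_{j=0}^{s'}P^j_{2(g'-s'+1)+j}\bigl((n+1)+s'-j\bigr)K_{g'-s'+j}\bigl((n+1)+s'-j\bigr),
\]
and this is $\tilde{K}_{g-1,s-1}(n+1)$ by the definition (\ref{8787659}) with $(g,s)$ replaced by $(g-1,s-1)$. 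Adding the two pieces gives (\ref{87666549}).

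The only real obstacle is the boundary case $s=1$, where $\tilde{K}_{g-1,0}$ lies outside the stated range $s=1,\ldots,g$ of (\ref{8787659}). I will adopt the natural convention of extending (\ref{8787659}) to $s=0$, which gives $\tilde{K}_{g-1,0}(n)=P^0_{2g}(n)K_{g-1}(n)$. For this I need $P^0_{2g}$ to make sense. Applying the definition via $\tilde{D}_{0,s}$ (the empty tuple is not excluded), I take $P^0_s=T^0_s=1$. With this convention the $s=1$ case is consistent, and the identity then holds for all $1\le s\le g$.
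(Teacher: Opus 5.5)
Your proof is correct and is the ``direct computation'' the paper alludes to. You split off the $j=s$ term and re-index the remaining sum under $(g,s,n)\mapsto(g-1,s-1,n+1)$, and that is all the lemma needs. Your $s=1$ convention $P^0_m=1$, giving $\tilde{K}_{g-1,0}=K_{g-1}$, is the one the paper itself uses: it writes $\tilde{K}_{g,0}=K_g$ and $\tilde{K}_{0,0}(n)=1/u_n$. Moreover, $P^0$ already appears in the $j=0$ term of (\ref{8787659}) for every $s$, so this is not an extra assumption.
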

This lemma and Lemma \ref{78654} are auxiliary to the following statement.
\begin{lemma} \label{675432} 
The two relations
\begin{eqnarray}
\tilde{K}_{g, s}(n+1)-\tilde{K}_{g, s}(n)&=&\frac{T^s_{2g-s+1}(n+1)}{\prod_{j=0}^{2g+1}u_{n+j}}\left(u_n{\hat{T}^g_g(n+2)}-u_{n+2g+1}{\hat{T}^g_g(n+1)}\right)
\label{87691} \\
&=&\frac{T^s_{2g-s+1}(n+1)}{\prod_{j=0}^{2g+1}u_{n+j}}\left({\hat{T}^{g+1}_{g+1}(n)}-{\hat{T}^{g+1}_{g+1}(n+1)}\right),
\label{8769}
\end{eqnarray}
for any $g\geq 1$ and $s=1,\ldots, g$, are identities.
\end{lemma}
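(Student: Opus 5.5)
First observe that (\ref{8769}) follows from (\ref{87691}) by Lemma \ref{78654} with $s=g$. So it suffices to prove (\ref{87691}), and by the formula preceding (\ref{870009}) this is equivalent to
\begin{equation*}
\tilde{K}_{g, s}(n+1)-\tilde{K}_{g, s}(n)=T^s_{2g-s+1}(n+1)\bigl(K_g(n+1)-K_g(n)\bigr).
\end{equation*}
I will prove this by induction on $s$, using the recurrence (\ref{87666549}). The base case is $s=0$, where $\tilde{K}_{g,0}(n)=P^0_{2g+2}(n)K_g(n)=K_g(n)$ and $T^0=1$; alternatively one can start at $s=1$ and check it directly from (\ref{87666549}) with $\tilde K_{g-1,0}=K_{g-1}$.

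For the inductive step, apply (\ref{87666549}) at $n$ and $n+1$ and subtract:
\begin{equation*}
\Delta\tilde{K}_{g,s}(n)=P^s_{2g-s+2}(n+1)K_g(n+1)-P^s_{2g-s+2}(n)K_g(n)+\Delta\tilde{K}_{g-1,s-1}(n+1),
\end{equation*}
where $\Delta f(n)=f(n+1)-f(n)$. By the induction hypothesis (in $s$, with $g$ replaced by $g-1$), the last term equals $T^{s-1}_{2g-s}(n+2)\bigl(K_{g-1}(n+2)-K_{g-1}(n+1)\bigr)$. Now use (\ref{870009}) for $g-1$ to write $K_{g-1}(n+2)-K_{g-1}(n+1)$ as $(\hat T^g_g(n+1)-\hat T^g_g(n+2))/\prod_{j=1}^{2g}u_{n+j}$. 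Multiplying and dividing by $u_nu_{n+2g+1}$, this becomes $u_nu_{n+2g+1}\bigl(K_g(n)-K_g(n+1)\bigr)$, up to the factor $u_{n+2g+1}$ versus $u_n$ placement. Explicitly, $\hat T^g_g(n+1)/\prod_{1}^{2g}u=u_nK_g(n)$ and $\hat T^g_g(n+2)/\prod_1^{2g}u=u_{n+2g+1}K_g(n+1)$.

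Collecting coefficients, the claim reduces to two scalar identities among the polynomials, for the coefficients of $K_g(n+1)$ and $K_g(n)$:
\begin{align*}
P^s_{2g-s+2}(n+1)-u_{n+2g+1}T^{s-1}_{2g-s}(n+2)&=T^s_{2g-s+1}(n+1),\\
P^s_{2g-s+2}(n)-u_nT^{s-1}_{2g-s}(n+2)&=T^s_{2g-s+1}(n+1).
\end{align*}
The second is exactly (\ref{64265}) with $k=s$ and index $2g-s+2$. For the first, (\ref{642}) at $n+1$ gives
\begin{equation*}
P^s_{2g-s+2}(n+1)=T^s_{2g-s+1}(n+1)+u_{n+2g+1}T^{s-1}_{2g-s}(n+2),
\end{equation*}
since $n+1+s+(2g-s+2)-2=n+2g+1$. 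So both identities hold, completing the induction.

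The main obstacle is bookkeeping. I need the correct base case and the validity of (\ref{87666549}) at $s=1$, including the convention $\tilde K_{g-1,0}=K_{g-1}$ and $P^1$ versus $T^1$ at the edge. I also need to check that the index shifts in (\ref{870009}) for $g-1$ match the product ranges exactly. If the $s=1$ case does not fit the recurrence, I will verify it directly, using the same two identities with $T^0=1$.
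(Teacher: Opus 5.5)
Your proof is correct and follows the paper's argument essentially step for step. You induct from $(g-1,s-1)$ to $(g,s)$ via (\ref{87666549}), rewrite the inductive term using (\ref{870009}) (equivalently Lemma \ref{78654}), and close with the same two identities from (\ref{64265}) and (\ref{642}); the base $\tilde K_{g',0}=K_{g'}$ must hold for all $g'\ge 0$, including $\tilde K_{0,0}=1/u_n$ when $s=g$. One slip: the sentence saying the term ``becomes $u_nu_{n+2g+1}(K_g(n)-K_g(n+1))$'' is wrong as written, but the explicit formulas you give right after it are correct and are what the argument actually uses.
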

\begin{proof}
First of all, we note that the equality (\ref{87691})=(\ref{8769}) follows from Lemma \ref{78654}. Taking (\ref{87666549}) into account, we obtain
\begin{eqnarray*}
\tilde{K}_{g, s}(n+1)-\tilde{K}_{g, s}(n)&=&\tilde{K}_{g-1, s-1}(n+2)-\tilde{K}_{g-1, s-1}(n+1)  \\
&&+P^s_{2g-s+2}(n+1)K_g(n+1)-P^s_{2g-s+2}(n)K_g(n). \nonumber
\end{eqnarray*}
Next we use mathematical induction. Suppose we have proved relation (\ref{8769}) with the replacement $g\rightarrow g-1$, $s\rightarrow s-1$ and $n\rightarrow n+1$. Then we can perform the following computations:
\begin{eqnarray}
\tilde{K}_{g, s}(n+1)-\tilde{K}_{g, s}(n)&=&\frac{T^{s-1}_{2g-s}(n+2)}{\prod_{j=1}^{2g}u_{n+j}}\left(\hat{T}^{g}_{g}(n+1)-\hat{T}^{g}_{g}(n+2)\right) \nonumber\\
&&+P^s_{2g-s+2}(n+1)\frac{\hat{T}^g_{g}(n+2)}{\prod_{j=1}^{2g+1}u_{n+j}}-P^s_{2g-s+2}(n)\frac{\hat{T}^g_{g}(n+1)}{\prod_{j=0}^{2g}u_{n+j}} \nonumber\\
&=&\left(P^s_{2g-s+2}(n+1)-u_{n+2g+1}T^{s-1}_{2g-s}(n+2)\right)\frac{\hat{T}^g_{g}(n+2)}{\prod_{j=1}^{2g+1}u_{n+j}} \nonumber\\
&&-\left(P^s_{2g-s+2}(n)-u_{n}T^{s-1}_{2g-s}(n+2)\right)\frac{\hat{T}^g_{g}(n+1)}{\prod_{j=0}^{2g}u_{n+j}}.
\label{67543209}
\end{eqnarray}
Now it is appropriate to use the two identities
\begin{eqnarray*}
T^s_{2g-s+1}(n+1)&=&P^s_{2g-s+2}(n+1)-u_{n+2g+1}T^{s-1}_{2g-s}(n+2) \\
&=&P^s_{2g-s+2}(n)-u_{n}T^{s-1}_{2g-s}(n+2),
\end{eqnarray*}
which follow from (\ref{64265}) and (\ref{642}). As a result, we obtain
\[
\tilde{K}_{g, s}(n+1)-\tilde{K}_{g, s}(n)=\frac{T^s_{2g-s+1}(n+1)}{\prod_{j=0}^{2g+1}u_{n+j}}\left(u_{n}\hat{T}^g_g(n+2)-u_{n+2g+1}\hat{T}^g_g(n+1)\right).
\]
Thus, we have shown that the validity of (\ref{8769}) for $(g-1, s-1)$ with some values of $g$ and $s$ implies the validity of (\ref{87691}) for $(g, s)$. Now we use the equality (\ref{87691})=(\ref{8769}). Note that $\tilde{K}_{g, 0}(n)=K_{g}(n)$ and for this function relations (\ref{87691}) and (\ref{8769}) are obviously true for any $g\geq 1$. Formally, these relations also hold for $\tilde{K}_{0, 0}(n)=1/u_{n}$. The lemma is proved by mathematical induction.
\end{proof}
We formulate the obtained result as a theorem.
\begin{theorem} \label{675430}
Equation (\ref{87654999}) has a set of invariants $\left(\tilde{K}_{g, 0},\ldots, \tilde{K}_{g, g}\right)$, defined by (\ref{8787659}).
\end{theorem}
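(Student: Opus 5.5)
The theorem follows almost immediately from Lemma \ref{675432}. The plan is to show that each $\tilde{K}_{g,s}(n)$, $s=0,\ldots,g$, is constant along every solution $(u_n)$ of (\ref{87654999}). This amounts to checking $\tilde{K}_{g,s}(n+1)-\tilde{K}_{g,s}(n)=0$ whenever (\ref{87654999}) holds.

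First, I rewrite (\ref{87654999}) in polynomial form:
\[
u_n\hat{T}^g_g(n+2)-u_{n+2g+1}\hat{T}^g_g(n+1)=0 .
\]
This is equivalent to (\ref{87654999}) wherever the denominators are nonzero, which is the generic situation in which the recurrence is defined.

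For $s=0$ we have $\tilde{K}_{g,0}(n)=K_g(n)$. Its increment is
\[
K_g(n+1)-K_g(n)=\frac{1}{\prod_{j=0}^{2g+1} u_{n+j}}\left(u_n\hat{T}^g_g(n+2)-u_{n+2g+1}\hat{T}^g_g(n+1)\right),
\]
as recorded just before (\ref{870009}). This vanishes on solutions, which recovers the invariant $K$ from the equivalence lemma.

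For $1\le s\le g$, identity (\ref{87691}) of Lemma \ref{675432} gives
\[
\tilde{K}_{g,s}(n+1)-\tilde{K}_{g,s}(n)=\frac{T^s_{2g-s+1}(n+1)}{\prod_{j=0}^{2g+1}u_{n+j}}\left(u_n\hat{T}^g_g(n+2)-u_{n+2g+1}\hat{T}^g_g(n+1)\right).
\]
The bracket vanishes on solutions, so the increment is zero. Since Lemma \ref{675432} is an identity in the free variables $u_n$ and the parameters $c_j$, it remains valid after restricting to solutions.

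The only real content is Lemma \ref{675432} itself, which is already established by induction on $(g,s)$ through (\ref{87666549}) and Lemma \ref{78654}. No step here is a genuine obstacle. The one point that needs care is the base of that induction, namely $\tilde{K}_{g-1,0}=K_{g-1}$ and $\tilde{K}_{0,0}=1/u_n$. For these, (\ref{8769}) holds unconditionally, and this follows directly from Lemma \ref{78654}.

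The theorem asserts only that these functions are invariants, so no independence claim needs proof here.
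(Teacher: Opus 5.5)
Your proposal is correct and is essentially the paper's argument: the theorem is stated as a direct consequence of Lemma \ref{675432}. The increment in (\ref{87691}) is proportional to $u_n\hat{T}^g_g(n+2)-u_{n+2g+1}\hat{T}^g_g(n+1)$, which vanishes on solutions of (\ref{87654999}), and the case $s=0$ is just the invariant $K_g$, coming from (\ref{870009}) and the base of the induction.
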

\begin{example}   \label{875432}
For the case $g=1$, equation (\ref{87654999}) takes the following form:
\begin{equation}
u_{n+3}=u_n\frac{u_{n+2}+c_1}{u_{n+1}+c_1}.
\label{7865498}
\end{equation}
According to Theorem \ref{675430} we have two invariants\footnote{We use here simpler notation: $\tilde{K}_s$ instead of $\tilde{K}_{g, s}$.}
\begin{equation}
\tilde{K}_{0}=K=\frac{u_{n+1}+c_1}{u_nu_{n+1}u_{n+2}}\;\; \mbox{and}\;\;  \tilde{K}_{1}=\frac{1}{u_{n+1}}+\left(u_n+u_{n+1}+u_{n+2}\right)\frac{u_{n+1}+c_1}{u_nu_{n+1}u_{n+2}}.
\label{7865490008}
\end{equation}
We solve the first relation for $u_{n+2}$. As a result, we obtain
\[
u_{n+2}=\frac{u_{n+1}+c_1}{K u_nu_{n+1}}.
\]
Substituting this expression into the second relation in (\ref{7865490008}), we obtain
\[
\frac{\tilde{K}_1}{K}=u_n+u_{n+1}+\frac{\alpha_0+\alpha_1(u_n+u_{n+1})}{u_nu_{n+1}}
\]
where $\alpha_0=c_1/K,\; \alpha_1=1/K$. As a result, we have obtained an invariant for the associated equation (\ref{6766438}) in the case $g=1$.
\end{example}
\begin{example}  \label{7777}
For the case $g=2$, equation (\ref{87654999}) takes the following form:
\[
u_{n+5}=u_n\frac{u_{n+2}u_{n+4}+c_1\left(u_{n+2}+u_{n+3}+u_{n+4}\right)+c_2}{u_{n+1}u_{n+3}+c_1\left(u_{n+1}+u_{n+2}+u_{n+3}\right)+c_2}.
\]
In this case, we have the following three invariants:
\[
\tilde{K}_{0}=K=\frac{u_{n+1}u_{n+3}+c_1\left(u_{n+1}+u_{n+2}+u_{n+3}\right)+c_2}{u_nu_{n+1}u_{n+2}u_{n+3}u_{n+4}},
\]
\begin{eqnarray*}
\tilde{K}_{ 1}&=&\frac{u_{n+2}+c_1}{u_{n+1}u_{n+2}u_{n+3}} +\left(u_n+u_{n+1}+u_{n+2}+u_{n+3}+u_{n+4}\right) \\
&&\times \frac{u_{n+1}u_{n+3}+c_1\left(u_{n+1}+u_{n+2}+u_{n+3}\right)+c_2}{u_nu_{n+1}u_{n+2}u_{n+3}u_{n+4}}
\end{eqnarray*}
and
\begin{eqnarray*}
\tilde{K}_{2}&=&\frac{1}{u_{n+2}}+\left(u_{n+1}+u_{n+2}+u_{n+3}\right)\frac{u_{n+2}+c_1}{u_{n+1}u_{n+2}u_{n+3}} \\
&&+\left(u_n(u_{n+2}+u_{n+3})+u_{n+1}(u_{n+3}+u_{n+4}\right)+u_{n+2}u_{n+4}) \\
&&\times \frac{u_{n+1}u_{n+3}+c_1\left(u_{n+1}+u_{n+2}+u_{n+3}\right)+c_2}{u_nu_{n+1}u_{n+2}u_{n+3}u_{n+4}}.
\end{eqnarray*}
We perform similar computations as in the previous example to derive invariants for the associated equation (\ref{6766438}) in the case $g=2$. As a result, we obtain
\[
\frac{\tilde{K}_1}{K}=u_n+u_{n+1}+u_{n+2}+u_{n+3}+\frac{\alpha_0+\alpha_1(u_n+u_{n+1}+u_{n+2}+u_{n+3})+\alpha_2(u_nu_{n+2}+u_{n+1}u_{n+3})}{u_nu_{n+1}u_{n+2}u_{n+3}}
\]
and
\begin{eqnarray*}
\frac{\tilde{K}_2}{K}&=&u_n\left(u_{n+2}+u_{n+3}\right)+u_{n+1}u_{n+3}+\left(u_{n+1}+u_{n+2}\right) \\
&&\times \frac{\alpha_0+\alpha_1(u_n+u_{n+1}+u_{n+2}+u_{n+3})+\alpha_2(u_nu_{n+2}+u_{n+1}u_{n+3})}{u_nu_{n+1}u_{n+2}u_{n+3}}  \\
&&+\frac{\alpha_1+\alpha_2(u_{n+1}+u_{n+2})}{u_{n+1}u_{n+2}}, 
\end{eqnarray*}
where $\alpha_0=c_2/K,\; \alpha_1=c_1/K,\; \alpha_2=1/K $.
\end{example}

\subsection{Invariants for equation (\ref{6766438})}

The computations in the above examples help to guess how the invariants $\tilde{K}_{g, s}(n)$ behave when we pass from the recurrence relation (\ref{87654999}) to its equivalent (\ref{6766438}). To answer this question, we define\footnote{In (\ref{89743209}) we use the notation 
\[
\hat{P}^s_{s+1}(n)=\sum_{j=0}^s c_j P^{s-j}_{s+j+1}(n).
\] 
}
\begin{equation}
J_{g, s}(n)= K \cdot T^{s}_{2g-s+1}(n) +\sum_{j=0}^{s-1} T^{s-j-1}_{2g-s-j}(n+j+1)\frac{\hat{P}^{g-j}_{g-j+1}(n+j)}{\prod_{l=j}^{2g-j-1} u_{n+l}},\;\;\forall s=1,\ldots, g.
\label{89743209}
\end{equation}
Note that (\ref{89743209}) is a function of the variables $\left(u_n,\ldots, u_{n+2g-1}\right)$.
\begin{lemma} \label{674320976}
The relation
\begin{equation}
J_{g, s}(n)-\tilde{K}_{g, s}(n)=T^s_{2g-s+1}(n)\left(K-\frac{\hat{T}^g_g(n+1)}{\prod_{j=0}^{2g}u_{n+j}}\right),\;\;\; \forall s=1,\ldots, g,
\label{8974320009999}
\end{equation}
is an identity.
\end{lemma}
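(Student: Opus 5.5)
The plan is to notice first that the constant $K$ cancels from (\ref{8974320009999}). Expanding the right-hand side and inserting definition (\ref{89743209}), the identity is equivalent to
\begin{equation}
\tilde{K}_{g, s}(n)=T^{s}_{2g-s+1}(n)K_g(n)+\sum_{j=0}^{s-1} T^{s-j-1}_{2g-s-j}(n+j+1)\frac{\hat{P}^{g-j}_{g-j+1}(n+j)}{\prod_{l=j}^{2g-j-1} u_{n+l}},
\label{planStar}
\end{equation}
where $K_g(n)=\hat{T}^g_g(n+1)/\prod_{j=0}^{2g}u_{n+j}$. This is a statement purely about the functions $\tilde{K}_{g,s}$, and I would prove (\ref{planStar}) by induction on $s$, with $g$ free, using the recurrence (\ref{87666549}).

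The base case is $s=0$. There $\tilde{K}_{g,0}(n)=K_g(n)$, the sum is empty and $T^0=1$, so (\ref{planStar}) holds trivially; for $g=0$ this also covers $K_0(n)=1/u_n$.

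For the induction step, assume (\ref{planStar}) for $(g-1,s-1)$ at $n+1$. By (\ref{87666549}),
\[
\tilde{K}_{g,s}(n)=P^s_{2g-s+2}(n)K_g(n)+\tilde{K}_{g-1,s-1}(n+1).
\]
Substituting the hypothesis and shifting the summation index $j\to j-1$ turns the sum coming from $\tilde{K}_{g-1,s-1}(n+1)$ exactly into the terms $j=1,\ldots,s-1$ of the sum in (\ref{planStar}); the products $\prod_{l=j}^{2g-j-1}u_{n+l}$ match after the shift. What remains to check is
\[
P^s_{2g-s+2}(n)K_g(n)+T^{s-1}_{2g-s}(n+1)K_{g-1}(n+1)=T^s_{2g-s+1}(n)K_g(n)+T^{s-1}_{2g-s}(n+1)\frac{\hat{P}^{g}_{g+1}(n)}{\prod_{l=0}^{2g-1}u_{n+l}}.
\]
Relation (\ref{642}), with $k=s$ and lower index $2g-s+2$, gives
\[
P^s_{2g-s+2}(n)-T^s_{2g-s+1}(n)=u_{n+2g}T^{s-1}_{2g-s}(n+1).
\]
After cancelling the common factor $T^{s-1}_{2g-s}(n+1)$, the claim reduces to
\[
u_{n+2g}K_g(n)+K_{g-1}(n+1)=\frac{\hat{P}^g_{g+1}(n)}{\prod_{l=0}^{2g-1}u_{n+l}}.
\]
Multiplying through by $\prod_{l=0}^{2g-1}u_{n+l}$, this becomes
\[
\hat{T}^g_g(n+1)+u_n\hat{T}^{g-1}_{g-1}(n+2)=\hat{P}^g_{g+1}(n).
\]

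I would verify this last identity coefficientwise in the $c_j$. For each $j$ it reads
\[
T^{g-j}_{g+j}(n+1)+u_nT^{g-j-1}_{g+j-1}(n+2)=P^{g-j}_{g+j+1}(n),
\]
which is exactly (\ref{64265}) with $k=g-j$ and lower index $g+j+1$.

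The main obstacle is the boundary bookkeeping rather than any real computation. The term $j=g$ (coefficient $c_g$) needs $P^0_{2g+1}=T^0_{2g}=1$ and $T^{-1}=0$, which must be fixed as conventions consistent with how $\hat{P}$ is used in (\ref{89743209}). The same care is needed for the degenerate index ranges of $\hat{T}^0_0=1$, and for the index shift in the sum when $s=1$, where the sum from the hypothesis is empty.
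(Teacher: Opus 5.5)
Your proof is correct and is essentially the paper's argument. The paper also reduces the lemma to the $K$-free identity (\ref{7643209}), then peels off the $j=0$ terms using $P^s_{2g-s+2}(n)-T^s_{2g-s+1}(n)=u_{n+2g}T^{s-1}_{2g-s}(n+1)$ and the generating identity $\hat{P}^g_{g+1}(n)=\hat{T}^g_g(n+1)+u_n\hat{T}^{g-1}_{g-1}(n+2)$, and iterates the resulting reduction $(g,s,n)\to(g-1,s-1,n+1)$ down to a trivial case. Your version packages that iteration as a formal induction on $s$ via the recurrence (\ref{87666549}), which makes the index bookkeeping cleaner.
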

\begin{proof}
Substituting (\ref{8787659}) and (\ref{89743209}) into (\ref{8974320009999}), we obtain a rather cumbersome relation:
\[
\sum_{j=0}^{s-1} T^{s-j-1}_{2g-s-j}(n+j+1)\frac{\hat{P}^{g-j}_{g-j+1}(n+j)}{\prod_{l=j}^{2g-j-1} u_{n+l}}-\sum_{j=0}^s P^{s-j}_{2g-s-j+2}(n+j) \frac{\hat{T}^{g-j}_{g-j}(n+j+1)}{\prod_{l=j}^{2g-j} u_{n+l}}
\]
\begin{equation}
=-T^s_{2g-s+1}(n)\frac{\hat{T}^g_g(n+1)}{\prod_{j=0}^{2g} u_{n+j}}.
\label{7643209}
\end{equation}

We need to prove that this is an identity. First, consider the simplest case $s=1$. It is not difficult to show that in this case we obtain the relation
\begin{equation}
\hat{P}^g_{g+1}(n)-\hat{T}^g_{g}(n+1)-u_n \hat{T}^{g-1}_{g-1}(n+2)=0.
\label{7645999879}
\end{equation}
This is simply the generating relation for a finite set of identities
\[
P^{g-s}_{g+s+1}(n)=T^{g-s}_{g+s}(n+1)+u_n T^{g-s-1}_{g+s-1}(n+2),\;\; \forall s=0,\ldots, g-1
\]
each of which follows from (\ref{64265}).

Next we perform the following two steps. First step: using the identity
\[
P^s_{2g-s+2}(n)-T^s_{2g-s+1}(n)=u_{n+2g}T^{s-1}_{2g-s}(n+1),
\]
which follows from (\ref{642}), we rewrite relation (\ref{7643209}) in equivalent form as\footnote{To avoid rewriting the cumbersome expressions in the two sums of (\ref{7643209}), we denote them by $\left(\cdot\right)$.}
\[
\sum_{j=0}^{s-1} \left(\cdot\right)-\sum_{j=1}^s \left(\cdot\right)=T^{s-1}_{2g-s}(n+1)\frac{\hat{T}^g_g(n+1)}{\prod_{j=0}^{2g-1} u_{n+j}}.
\]
Second step: using identity (\ref{7645999879}), we rewrite the last relation as follows:
\begin{equation}
\sum_{j=1}^{s-1} \left(\cdot\right)-\sum_{j=1}^s \left(\cdot\right)=-T^{s-1}_{2g-s}(n+1)\frac{\hat{T}^{g-1}_{g-1}(n+2)}{\prod_{j=1}^{2g-1} u_{n+j}}.
\label{7677777}
\end{equation}
As a result, we have shown that relation (\ref{7643209}), which we need to prove, is equivalent to (\ref{7677777}). How do these two relations differ? First, the right-hand side of (\ref{7677777}) can be obtained as follows: take the right-hand side of (\ref{7643209}) and replace $g\rightarrow g-1$, $s\rightarrow s-1$, $n\rightarrow n+1$. Second, one can notice that the summations on the left-hand side start not from zero but from one. Now we perform these two steps of computations several times, or more precisely, their analogues. As a result, we obtain that the required identity (\ref{7643209}) is equivalent to
\[
\sum_{j=k}^{s-1} \left(\cdot\right)-\sum_{j=k}^s \left(\cdot\right)=-T^{s-k}_{2g-s-k+1}(n+k)\frac{\hat{T}^{g-k}_{g-k}(n+k+1)}{\prod_{j=k}^{2g-k} u_{n+j}},
\]
where $k\leq s$. Setting $k=s$ here, this relation reduces to a trivial identity.
\end{proof}

\subsection{Main result}

Having proved this lemma, we obtain a finite set of invariants for the recurrence relation (\ref{8832}) and for (\ref{6766438}). We will write $H_s$ instead of $J_{g, s}/K$. Recall that $J_{g, s}$ is given by formula (\ref{89743209}). We can adapt this expression for our associated equation (\ref{6766438}) to write its invariants, say as in the following theorem, which we have just proved by combining Lemmas \ref{675432} and \ref{674320976}.
\begin{theorem} \label{675422212}
The recurrence relation (\ref{6766438}) has the following invariants:
\begin{equation}
H_{s}=T^{s}_{2g-s+1}(n)+\sum_{j=0}^{s-1} T^{s-j-1}_{2g-s-j}(n+j+1)\frac{P_{j, g-j}(n+j)}{\prod_{l=j}^{2g-j-1} u_{n+l}},\;\; s=1,\ldots, g,
\label{7688767}
\end{equation}
where
\[
P_{j, g-j}(n)=\sum_{s=0}^{g-j} \alpha_{j+s}P^{s}_{2g-2j-s+1}(n),\;\; j=0,\ldots, g.
\]
\end{theorem}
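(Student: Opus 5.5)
The plan is to combine Lemma \ref{675432} with Lemma \ref{674320976}, and then translate the parameters $c_j$ into the $\alpha_j$. Fix $g\geq 1$ and an arbitrary solution $(u_n)$ of (\ref{6766438}). By the equivalence lemma, this sequence also solves (\ref{87654999}) with $c_{g-j}=K\alpha_j$. Here $K$ is the constant value of $K_g(n)=\hat{T}^g_g(n+1)/\prod_{j=0}^{2g}u_{n+j}$, which we may take nonzero because $\alpha_g=1/K$ is a parameter of (\ref{6766438}). Equivalently, $c_j=\alpha_{g-j}/\alpha_g$ and $K=1/\alpha_g$, assuming $\alpha_g\neq 0$. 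The degenerate case can be handled afterwards by clearing denominators and using polynomial continuity in the $\alpha_j$.

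\emph{Step 1: each $\tilde K_{g,s}$ is invariant along the solution.} Along a solution of (\ref{87654999}) we have $u_{n+2g+1}\hat{T}^g_g(n+1)=u_n\hat{T}^g_g(n+2)$. The factor in parentheses in (\ref{87691}) therefore vanishes, and Lemma \ref{675432} gives $\tilde K_{g,s}(n+1)=\tilde K_{g,s}(n)$ for $s=1,\ldots,g$. This is exactly Theorem \ref{675430}.

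\emph{Step 2: $J_{g,s}(n)$ coincides with $\tilde K_{g,s}(n)$ on solutions.} By Lemma \ref{674320976}, $J_{g,s}(n)-\tilde K_{g,s}(n)=T^s_{2g-s+1}(n)\bigl(K-K_g(n)\bigr)$. On our solution $K_g(n)\equiv K$, so $J_{g,s}(n)=\tilde K_{g,s}(n)$ for all $n$. Hence $J_{g,s}$ is constant along the sequence, and so is $H_s:=J_{g,s}/K$. The point of passing from $\tilde K_{g,s}$ to $J_{g,s}$ is that $J_{g,s}$ depends only on $(u_n,\ldots,u_{n+2g-1})$. These are precisely the initial data of (\ref{6766438}), so $J_{g,s}$ is a genuine function on the phase space of the associated equation, with $K$ now treated as the parameter $1/\alpha_g$.

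\emph{Step 3: rewrite $J_{g,s}/K$ in the $\alpha$'s.} Divide (\ref{89743209}) by $K$. The first term becomes $T^s_{2g-s+1}(n)$. In the sum we need the identity
\[
\frac{1}{K}\hat{P}^{g-j}_{g-j+1}(n+j)=\frac{1}{K}\sum_{i=0}^{g-j}c_iP^{g-j-i}_{g-j+i+1}(n+j)=P_{j,g-j}(n+j).
\]
This is a relabeling check. Substitute $c_i/K=\alpha_{g-i}$ and reindex by $r=g-j-i$, so that $\alpha_{g-i}=\alpha_{j+r}$ and $P^{g-j-i}_{g-j+i+1}=P^{r}_{2g-2j-r+1}$. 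This matches the definition of $P_{j,g-j}$ in the statement and yields (\ref{7688767}).

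The main thing I expect to watch carefully is the index bookkeeping in Step 3. I would check it against Examples \ref{875432} and \ref{7777}: for $g=1$, the formula should reproduce $\tilde K_1/K$. A second point to confirm is the legitimacy of treating $K$ as a free constant, i.e.\ that every solution of (\ref{6766438}) arises from (\ref{87654999}) with a constant $K_g$. Both directions of this are contained in the equivalence lemma.
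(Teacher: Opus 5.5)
Your proposal is correct and follows the paper's own route: the paper also obtains the theorem by combining Lemma~\ref{675432} ($\tilde K_{g,s}$ is invariant on solutions of (\ref{87654999})) with Lemma~\ref{674320976} ($J_{g,s}=\tilde K_{g,s}$ once $K_g\equiv K$), and then writes $H_s=J_{g,s}/K$ with $\alpha_j=c_{g-j}/K$. Your explicit reindexing in Step 3 and your continuity argument for the degenerate case $\alpha_g=0$ are extra care that the paper leaves implicit.
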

Thus, the main result of this section is this theorem. We have constructed a finite set of invariants for the associated recurrence relation (\ref{6766438}), expressed in terms of the discrete polynomials $T^k_s(n)$. Thereby, we have also obtained a finite set of invariants for $R_{2g+3}$.

We can transform relation (\ref{7688767}). Let
\begin{equation}
T_{j, g-j}(n)=\sum_{s=0}^{g-j} \alpha_{j+s}T^{s}_{2g-2j-s}(n),\;\; j=0,\ldots, g.
\label{000767}
\end{equation}
Direct verification proves the following statement.
\begin{lemma}
In view of (\ref{642}), the identity
\[
P_{j, g-j}(n)=T_{j, g-j}(n)+u_{n+2g-2j-1}T_{j+1, g-j-1}(n+1),
\]
holds for any $j\geq 0$.
\end{lemma}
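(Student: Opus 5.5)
The identity is linear in the parameters $\alpha_j$, so I will prove it coefficientwise. By definition,
\[
P_{j, g-j}(n)=\sum_{s=0}^{g-j}\alpha_{j+s}P^s_{2g-2j-s+1}(n),\qquad T_{j, g-j}(n)=\sum_{s=0}^{g-j}\alpha_{j+s}T^s_{2g-2j-s}(n).
\]
Reindexing by $s\mapsto s+1$ in (\ref{000767}) for $j+1$ gives
\[
T_{j+1, g-j-1}(n+1)=\sum_{s=1}^{g-j}\alpha_{j+s}T^{s-1}_{2g-2j-s-1}(n+1).
\]
Comparing the coefficients of $\alpha_{j+s}$, the claim reduces to the family of identities
\[
P^s_{2g-2j-s+1}(n)=T^s_{2g-2j-s}(n)+u_{n+2g-2j-1}T^{s-1}_{2g-2j-s-1}(n+1),\qquad s=0,\ldots,g-j,
\]
where the $s=0$ term has no second summand.

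For $s\geq 1$ I will apply (\ref{642}) with $k=s$ and lower index $S=2g-2j-s+1$. This gives
\[
P^s_S(n)=T^s_{S-1}(n)+u_{n+s+S-2}T^{s-1}_{S-2}(n+1).
\]
Here $S-1=2g-2j-s$, $S-2=2g-2j-s-1$, and $s+S-2=2g-2j-1$, which is exactly the required identity.

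The boundary cases need the conventions fixed earlier. For $s=0$ we need $P^0_{S}=T^0_{S-1}=1$, reading $P^0$ via the same explicit sum with $\tilde D_{0,S}=D_{0,S}$ consisting of the empty tuple. For $s=g-j$ we have $S=s+1$, so the $T^{s-1}_{s-1}$ term is covered by the Remark after (\ref{642}), which states $P^k_{k+1}(n)=\prod u_{n+2i}+\prod u_{n+2i+1}$. I will also check that $T^{s-1}_{S-2}$ with $S-2<s-1$ never occurs, or is zero by convention.

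The main thing to be careful about is exactly these edge cases, $s=0$ and $s=1$ (where $T^0=1$), and the convention for $P^0$ and $P^1$. Everything else is direct substitution into (\ref{642}). For $g-j=0$ the second term vanishes because $T_{g+1,-1}=0$, and the statement becomes $\alpha_g P^0_1=\alpha_g T^0_0$, i.e.\ $1=1$.
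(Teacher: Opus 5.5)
Your proposal is correct and is essentially the paper's argument: the paper only says the identity follows by direct verification from (\ref{642}), which amounts to your coefficientwise comparison of $\alpha_{j+s}$ followed by applying (\ref{642}) with $k=s$ and lower index $2g-2j-s+1$. The edge-case check you defer is immediate, since $2g-2j-s-1\geq s-1$ exactly when $s\leq g-j$, and the convention $P^0_S=T^0_{S-1}=1$ is the one consistent with the paper's formula for $H_1$ in the case $g=1$.
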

Using this lemma and applying identity (\ref{64265}), we can rewrite the expression for the invariants (\ref{7688767}) as
\begin{equation}
H_{s}=T^{s}_{2g-s+1}(n)+T^{s-1}_{2g-s}(n+1)\frac{T_{0, g}(n)}{\prod_{j=0}^{2g-1} u_{n+j}}+\sum_{j=1}^s P^{s-j}_{2g-s+j}(n) \frac{T_{j, g-j}(n+j)}{\prod_{l=j-1}^{2g-j-1} u_{n+l}}.
\label{0776547}
\end{equation}
A nontrivial fact is that all these invariants depend linearly on the parameters of the equation (\ref{6766438}). Moreover, one can notice that they are expressed through standard linear combinations $T_{j, g-j}(n)$, defined by (\ref{000767}). Note that the dependence between these invariants could only be linear, but it is fairly obvious that such dependence cannot exist, if only because the \textit{free} term in $H_s$ is a discrete homogeneous polynomial of degree $s$.

\section{More general criteria for integrality} \label{765498}

 We present an equation which is conjecturally equivalent to relation (\ref{6766438}). It is defined in terms of certain discrete polynomials $S^k_s(n)$, which are closely related to $T^k_s(n)$.

\subsection{Discrete polynomials $S^k_s(n)$}

In addition to $T^k_s(n)$, we define discrete polynomials $S^k_s(n)$ by the relation \cite{Svinin2}
\[
S^k_s(n)=\sum_{j=0}^{s-1} u_{n+j+k-1} S^{k-1}_{s-j}(n+j)
\]
with the condition $S^0_s(n)=1$. An explicit formula can also be written for these polynomials:
\[
S^k_s(n)=\sum_{0\leq \lambda_1\leq \cdots\leq  \lambda_k\leq s-1} \prod_{j=1}^k u_{n+\lambda_j+k-j}. 
\]

\subsection{Equivalent equation}

\begin{conjecture}  \label{675438888209}
The associated recurrence relation (\ref{6766438}) is equivalent to the equation
\begin{equation}
u_{n+g}\hat{S}^g_{g+2}(n)=(-1)^g\alpha_g,
\label{67543298}
\end{equation}
where 
\[
\hat{S}^g_{g+2}(n)=\sum_{j=0}^{g} (-1)^j  S^{g-j}_{g-j+2}(n+j)H_j.
\]
The coefficients $H_j$ here are the invariants from Theorem \ref{675422212}.
\end{conjecture}
Relation (\ref{67543298}) is an equation of order $2g$ with parameters. Its solution is determined by the data $\left(u_0,\ldots, u_{2g-1}; H_1,\ldots, H_g; \alpha_g\right)\in \mathbb{C}^{3g+1}$. Despite the fact that Conjecture \ref{675438888209} is not proved in general, for small $g$ this statement can be verified directly by substituting the expressions (\ref{0776547}) for the invariants. Below we consider two cases of computations related to equation (\ref{67543298}) for $g=1$ and $g=2$. We emphasize once again that in these specific cases the above statement is not conjectural but can be verified directly.

\subsection{$g=1$. Equivalent equation}

In this case, relation (\ref{67543298}) is written as
\begin{equation}
u_{n+1}\left(u_n+u_{n+1}+u_{n+2}-H_1\right)+\alpha_1=0.
\label{7863257}
\end{equation}
We compute the invariant $H_1$ by formula (\ref{0776547}) and obtain\footnote{In essence, we have already given the expression for this invariant in Example \ref{875432}.}
\begin{equation}
H_1=u_n+u_{n+1}+\frac{\alpha_0+\alpha_1 u_n}{u_nu_{n+1}}+\frac{\alpha_1}{u_n}.
\label{78632000957}
\end{equation}
On one hand, substituting this expression into (\ref{7863257}), we obtain a special case of the associated equation (\ref{6766438}), namely
\begin{equation}
u_nu_{n+1}u_{n+2}=\alpha_0+\alpha_1 u_{n+1}.
\label{786325000987}
\end{equation}
On the other hand, one can check that
\begin{equation}
\alpha_0= -u_nu_{n+1}(u_n+u_{n+1}-H_1)-(u_n+u_{n+1})\alpha_1
\label{7887}
\end{equation}
is an invariant for equation (\ref{7863257}). Substituting (\ref{7887}) into (\ref{786325000987}) reduces this relation to equation (\ref{7863257}). Thus, equations (\ref{7863257}) and (\ref{786325000987}) are equivalent.
\begin{remark}
Relation (\ref{78632000957}) implies that $\left(u_n, u_{n+1}\right)\in \mathbb{C}^2$ lies on the curve defined by the equation
\[
\left(X+Y-H_1\right)\left(XY+\alpha_1\right)+\alpha_0+\alpha_1 H_1=0.
\]
It turns out that this equation defines an elliptic curve. By a fractional-linear change of variables, it can be reduced to the Weierstrass canonical form: $y^2=4x^3-g_2 x -g_3$. This fact allowed the solution of the associated equation (\ref{786325000987}) to be expressed in terms of the Weierstrass $\sigma$-function in \cite{Hone1}. This, in turn, made it possible to express any Somos-5 sequence in terms of the Weierstrass $\sigma$-function.
\end{remark}

\subsection{$g=1$. Construction of an integer Somos-5 sequence}

It is easy to check that, by means of substitution (\ref{676678888768}), relation (\ref{7863257}) reduces to the equation
\begin{equation}
t_nt_{n+3}^2t_{n+4}+t_{n+1}^2t_{n+4}^2+t_{n+1}t_{n+2}^2t_{n+5}-t_{n+1}t_{n+2}t_{n+3}t_{n+4} H_1+\alpha_1  t^2_{n+2}t^2_{n+3}=0.
\label{7863555555257}
\end{equation}
One can also check that substituting expression (\ref{78632000957})\footnote{Provided we have made the substitution (\ref{676678888768}).} into (\ref{7863555555257}) leads to the standard Somos-5 form (\ref{5}). Although relation (\ref{7863555555257}) looks more cumbersome than (\ref{5}), it has its own features. The following statement was proved in \cite{Svinin8}.
\begin{proposition} \label{67543224}
For equation (\ref{7863555555257}) we have
\[
t_n\in \mathbb{Z}[t_0, t_1^{\pm 1}, t_2^{\pm 1}, t_3^{\pm 1}, t_4; H_1, \alpha_1],\; \forall n\in\mathbb{Z}.
\]
\end{proposition}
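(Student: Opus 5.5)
The plan is to deduce Proposition \ref{67543224} from the ordinary Laurent property of Somos-5 (Theorem \ref{76888766} with $g=1$), and then remove the unwanted denominators $t_0$ and $t_4$ by a coprimality argument. Write
\[
R=\mathbb{Z}[t_0, t_1^{\pm 1}, t_2^{\pm 1}, t_3^{\pm 1}, t_4; H_1, \alpha_1].
\]
We treat $t_0,\ldots,t_4,H_1,\alpha_1$ as independent indeterminates, so $R$ is a localization of a polynomial ring and hence a UFD.

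\emph{Step 1.} Express $\alpha_0$ through the new data. By (\ref{78632000957}),
\[
\alpha_0=u_0u_1(H_1-u_0-u_1)-\alpha_1(u_0+u_1),\qquad u_0=\frac{t_0t_3}{t_1t_2},\quad u_1=\frac{t_1t_4}{t_2t_3}.
\]
The only denominators here are monomials in $t_1,t_2,t_3$, so $\alpha_0\in R$.

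\emph{Step 2.} Identify the sequence with a Somos-5 sequence. As stated just before the proposition, substituting (\ref{78632000957}) into (\ref{7863555555257}) gives Somos-5 (\ref{5}). So the sequence generated by (\ref{7863555555257}) is the Somos-5 sequence with parameters $(\alpha_0,\alpha_1)$. Theorem \ref{76888766} then gives $t_n\in R[t_0^{-1},t_4^{-1}]$ for all $n$.

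\emph{Step 3.} Get a second localization by shifting the window. Solving (\ref{7863555555257}) at $n=-1$ for $t_{-1}$ and at $n=0$ for $t_5$ gives
\[
t_{-1}=-\frac{t_0^2t_3^2+t_0t_1^2t_4-t_0t_1t_2t_3H_1+\alpha_1t_1^2t_2^2}{t_2^2t_3},\qquad
t_5=\frac{t_1t_2t_3t_4H_1-\alpha_1t_2^2t_3^2-t_0t_3^2t_4-t_1^2t_4^2}{t_1t_2^2}.
\]
Both lie in $R$. The sequence is shift-invariant, so Theorem \ref{76888766} applies in the window $(t_{-1},\ldots,t_3)$. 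Together with $\alpha_0\in R$, this gives $t_n\in R[t_0^{-1},t_{-1}^{-1}]$. Intersecting with Step 2,
\[
t_n\in R[t_0^{-1},t_4^{-1}]\cap R[t_0^{-1},t_{-1}^{-1}].
\]

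\emph{Step 4.} Remove $t_4^{-1}$. The numerator of $t_{-1}$ is linear in $t_4$ with coefficient $t_0t_1^2$, which is a unit in $R[t_0^{-1}]$. Hence $t_{-1}$ is a unit times a prime in the UFD $R[t_0^{-1}]$, and this prime is not associate to $t_4$ (its $t_4$-free part is nonzero). An element of $R[t_0^{-1}]$ localized at two non-associate primes $t_4$ and $t_{-1}$ must lie in $R[t_0^{-1}]$ itself, so $t_n\in R[t_0^{-1}]$.

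The same argument applied to the window $(t_1,\ldots,t_5)$ removes $t_0^{-1}$. Here $t_5$ is linear in $t_0$ with coefficient $-t_3^2t_4/(t_1t_2^2)$. This is a unit only in $R[t_4^{-1}]$, so one works in $R[t_4^{-1}]$: the same reasoning shows $t_n\in R[t_4^{-1}]$. Then $t_n\in R[t_0^{-1}]\cap R[t_4^{-1}]=R$, since $t_0$ and $t_4$ are non-associate primes. Alternatively, one can use the reversal symmetry of Proposition \ref{004432767}.

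The main obstacle I expect is Step 2: checking that the identification with Somos-5 is exact for every $n$, i.e.\ that $\alpha_0$ defined through $H_1$ is a genuine invariant for (\ref{7863257}), as in (\ref{7887}). A secondary risk is the irreducibility and coprimality bookkeeping in Step 4, in particular making sure each localization used is legitimate.
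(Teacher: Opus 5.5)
Your proof is correct, and it takes a route the paper does not spell out. The paper gives no argument for this proposition: it cites \cite{Svinin8}, credits the idea to \cite{Hone2}, and its closing remark describes that proof as a template applied to the equivalent equation of each $R_{2g+3}$ (here (\ref{7863555555257}), with forward and backward denominators $t_{n+1}t_{n+2}^2$ and $t_{n+3}^2t_{n+4}$).

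You instead derive the proposition as a corollary of Theorem \ref{76888766} for $g=1$, with $\alpha_0$ free, used in the windows $(t_0,\ldots,t_4)$, $(t_{-1},\ldots,t_3)$ and $(t_1,\ldots,t_5)$. You then remove $t_4^{-1}$ in $R[t_0^{-1}]$ and $t_0^{-1}$ in $R[t_4^{-1}]$. This works because the numerators of $t_{-1}$ and $t_5$ are linear in $t_4$, resp.\ $t_0$, with unit leading coefficient and nonzero constant term. So they are primes not associate to $t_4$, resp.\ $t_0$, and every intersection you take behaves as claimed.

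The step you flagged is sound. Under (\ref{7863257}), the expression (\ref{7887}) at $n$ and at $n+1$ both reduce to $u_nu_{n+1}u_{n+2}-\alpha_1u_{n+1}$. So $\alpha_0$ is constant, and (\ref{7863257}) becomes (\ref{786325000987}), which is Somos-5 after (\ref{676678888768}).

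In a write-up, add one sentence on the specializations. The point is that $\alpha_0$ is linear in $H_1$ with coefficient $u_0u_1\neq 0$, hence transcendental over $\mathbb{Q}(t_0,\ldots,t_4,\alpha_1)$. The generic Somos-5 terms therefore stay nonzero, and the recursion identifies the sequences in each window.

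What your route buys is brevity and a clear logical picture. It shows the enhanced Laurent property follows from the ordinary one by a few lines of commutative algebra. So the paper's remark that the converse implication fails can only mean the proposition is formally stronger. The trade-off is that Theorem \ref{76888766} is an input to your argument. A direct proof of the proposition, as the paper notes, would instead recover the Laurent property of Somos-5 as a consequence.
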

Since $H_1\in \mathbb{Z}[t_0^{\pm 1}, t_1^{\pm 1}, t_2^{\pm 1}, t_3^{\pm 1}, t_4^{\pm 1}; \alpha_0, \alpha_1]$\footnote{Here we assume that $H_1$ is computed for $n=0$, which is of course not a principal issue.}, this means that the ordinary Laurent property for Somos-5 follows from Proposition \ref{67543224}, but not conversely.

From Proposition \ref{67543224} follows a method for constructing an integer Somos-5 sequence. We choose arbitrarily the following data:
\[
t_k=\pm 1\;\; \mbox{for}\;\;  k=1, 2, 3\;\; \mbox{and}\;\;   \left(t_0,\; t_4,\; \alpha_1,\; H_1\right)\in\mathbb{Z}^4.
\] 
The value of the parameter $\alpha_0$ is computed by formula (\ref{7887}). Any such choice of data leads to some integer Somos-5 sequence. In particular, the data
\[
t_k=1\;\; \mbox{for}\;\;  k=1, 2, 3\;\; \mbox{and}\;\;   \left(t_0,\; t_4,\; \alpha_1,\; H_1\right)=\left(1,\; 1,\; 1,\; 5\right)
\] 
correspond to the sequence \textrm{A006721}.
\begin{remark}
For any chosen data, we obtain $\left(u_0,\; u_1\right) \in\mathbb{Z}^2$ and consequently, in view of (\ref{7887}), $\alpha_0\in\mathbb{Z}$, although we do not, generally speaking, require $\alpha_0$ to be integer.
\end{remark}

\subsection{$g=2$. Equivalent equation.}

In this case, equation (\ref{67543298}) in explicit form looks as follows:
\begin{eqnarray}
&& u_{n+2}\left(u_{n+1}(u_n+u_{n+1}+u_{n+2}+u_{n+3})+u_{n+2}(u_{n+1}+u_{n+2}+u_{n+3}) \right. \nonumber\\
&&\;\;\;\;\;\;\;\;\;\; \left.  +u_{n+3}(u_{n+2}+u_{n+3})+u_{n+4}u_{n+3}-H_1(u_{n+1}+u_{n+2}+u_{n+3})+H_2\right) \nonumber\\
&&\;\;\;\;\;\;\;\;\;\; -\alpha_2=0.
\label{79}
\end{eqnarray}
\begin{remark}
It turns out that (\ref{79}), up to notation, is equation~1.2 in \cite{Hone2}; however, it originally appeared in \cite{Gubbiotti} in the classification of four-dimensional maps with two independent invariants. In \cite{Hone2} it was shown that this equation is Liouville integrable. There one can also find a visualization of ten thousand points $\left(u_n, u_{n+1}, u_{n+2}\right)\in\mathbb{Q}^3$, generated by the recurrence relation (\ref{79}) with the values $\left(H_1, H_2, \alpha_2\right)=\left(10, 29, 9\right)$ and rational positive initial conditions.
\end{remark}
It can be shown that (\ref{79}) is equivalent to the associated equation
\begin{equation}
u_nu_{n+1}u_{n+2}u_{n+3}u_{n+4}=\alpha_0+\alpha_1 \left(u_{n+1}+u_{n+2}+u_{n+3}\right)+\alpha_2  u_{n+1}u_{n+3}.
\label{77654}
\end{equation}
Indeed, one can show that equation (\ref{79}) has two invariants
\begin{eqnarray}
\alpha_0&=& -u_{n+1}u_{n+2}\left\{\left(u_{n+1}(u_{n}+u_{n+1}+u_{n+2})+u_{n+2}(u_{n+1}+u_{n+2})+u_{n+3}u_{n+2} \right) \right. \nonumber\\
&& \times (u_{n}+u_{n+1}+u_{n+2}+u_{n+3}) \nonumber\\
&&-(u_{n}+u_{n+1}+u_{n+2})(u_{n+1}+u_{n+2}+u_{n+3}) H_1 \nonumber\\
&&\left. +(u_{n}+u_{n+1}+u_{n+2}+u_{n+3})H_2\right\} \nonumber\\
&&+(u_{n+1}(u_{n}+u_{n+1}+u_{n+2})+u_{n+2}(u_{n+1}+u_{n+2})+u_{n+3}u_{n+2})\alpha_2,
\label{67549876}
\end{eqnarray}
\begin{eqnarray}
\alpha_1&=& u_{n+1}u_{n+2}\left\{u_{n+1}(u_{n}+u_{n+1}+u_{n+2})+u_{n+2}(u_{n+1}+u_{n+2})+u_{n+3}u_{n+2}-u_nu_{n+3} \right.\nonumber \\
&&\left.  -(u_{n+1}+u_{n+2})H_1+H_2\right\}-(u_{n+1}+u_{n+2})\alpha_2
\label{677776656}
\end{eqnarray}
and their substitution into (\ref{77654}) leads to (\ref{79}). On the other hand, substituting $H_1$ and $H_2$\footnote{The expressions for the invariants $H_1$ and $H_2$ are given in Example \ref{7777}.} into (\ref{79}) yields (\ref{77654}).

\subsection{$g=2$. Construction of an integer $R_7$ sequence}

Substitution of (\ref{676678888768}) into (\ref{79}) leads to the following relation:
\begin{eqnarray}
&&t_nt_{n+3}^2t_{n+4}^3t_{n+5}^2+t_{n+1}^2t_{n+4}^4t_{n+5}^2+2 t_{n+1} t_{n+2}^2 t_{n+4}^2t_{n+5}^3+t_{n+1}t_{n+2}t_{n+3}^2 t_{n+4}^2 t_{n+5}t_{n+6}  \nonumber\\
&&\;\;\;\;\;\;\;\;\;\;\;\;\; +t_{n+2}^4t_{n+5}^4+2t_{n+2}^3t_{n+5}^2t_{n+3}^2t_{n+6}+t_{n+3}^4t_{n+6}^2t_{n+2}^2+t_{n+7}t_{n+3}^3t_{n+2}^2t_{n+4}^2 \nonumber \\
&&\;\;\;\;\;\;\;\;\;\;\;\;\;  -t_{n+2}t_{n+3}t_{n+4}t_{n+5}\left(t_{n+1}t_{n+4}^2t_{n+5}+t_{n+5}^2t_{n+2}^2+t_{n+6}t_{n+3}^2t_{n+2}\right)H_1 \nonumber \\
&&\;\;\;\;\;\;\;\;\;\;\;\;\; +t_{n+2}^2t_{n+3}^2t_{n+4}^2t_{n+5}^2H_2-t_{n+2}t_{n+3}^3t_{n+4}^3t_{n+5} \alpha_2 =0. 
\label{786453209}
\end{eqnarray}
It was also written out, albeit in different notation, in \cite{Hone2}. More precisely, this is relation~2.4 in that paper. One can verify that substituting $H_1$ and $H_2$ into (\ref{786453209}) reduces it to the equation $R_7$. The following statement was proved in \cite{Hone2}.
\begin{proposition} \label{644324}
For equation (\ref{786453209}) we have
\[
t_n\in \mathbb{Z}[t_0, t_1, t_2^{\pm 1}, t_3^{\pm 1}, t_4^{\pm 1}, t_5, t_6; H_1, H_2, \alpha_2],\; \forall n\in\mathbb{Z}.
\]
\end{proposition}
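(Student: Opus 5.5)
The plan is to adapt the standard Laurent-phenomenon strategy, as used for Proposition \ref{67543224}, to the order-$7$ relation (\ref{786453209}). First I solve (\ref{786453209}) for the highest index. The term containing $t_{n+7}$ is $t_{n+7}t_{n+3}^3t_{n+2}^2t_{n+4}^2$, so
\[
t_{n+7}=\frac{Q(t_n,\ldots,t_{n+6};H_1,H_2,\alpha_2)}{t_{n+2}^2t_{n+3}^3t_{n+4}^2},
\]
where $Q$ is a polynomial with integer coefficients. Similarly, the backward step solves for $t_n$ with denominator $t_{n+3}^2t_{n+4}^3t_{n+5}^2$; this mirrors the reversal symmetry of Proposition \ref{004432767}. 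Hence each forward step only divides by the three middle variables $t_{n+2},t_{n+3},t_{n+4}$. Starting from $n=0$, $t_7$ lies in $\mathbb{Z}[t_0,t_1,t_2^{\pm1},t_3^{\pm1},t_4^{\pm1},t_5,t_6;H_1,H_2,\alpha_2]$, and likewise $t_{-1}$ by the backward step.

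Next I run the Fomin--Zelevinsky style induction (the ``Caterpillar'' argument, or equivalently Hone's method of checking coprimality over a few steps). Let $\mathcal{R}$ denote the ring in the statement. Assume $t_0,\ldots,t_{m}\in\mathcal{R}$, and write $t_{m+1}=Q/(t_{m-4}^2t_{m-3}^3t_{m-2}^2)$. I must show that this Laurent polynomial has no negative powers of $t_0,t_1,t_5,t_6$, and that the only denominators left are powers of $t_2,t_3,t_4$. Following \cite{Hone2}, it is easiest to work in the $u$-picture. Since (\ref{79}) is linear in $u_{n+4}$ and $u_{n+3}u_{n+4}$ appears multiplied by $u_{n+2}$, the map is a birational map of $\mathbb{C}^4$ preserving $H_1,H_2,\alpha_2$. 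The new $t_{n+7}$ equals $u_{n+4}t_{n+5}t_{n+6}/t_{n+4}$, and the poles of $u$ trace back to zeros of earlier $t$'s.

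The concrete check proceeds in three steps. (i) Compute $t_7,t_8,t_9,\ldots$ up to $t_{13}$ symbolically, treating $H_1,H_2,\alpha_2$ as free variables. Verify directly that each lies in $\mathcal{R}$, and that $t_8,\ldots,t_{13}$ are pairwise coprime with the relevant earlier terms in $\mathcal{R}$; in particular, $t_8$ is coprime to $t_7$ and to the monomials in the initial data. (ii) Apply the general lemma, as in \cite{Fomin} Theorem~2.1 or Hone's ``Laurentness from coprimality'' lemma: if the recurrence is shift-invariant and the first several iterates satisfy these coprimality conditions, then all iterates are Laurent in the chosen variables. Shifting the initial window by one turns $t_1,\ldots,t_7$ into new seeds. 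Knowing $t_7$, $t_8$, $t_9$ in terms of the old seeds, the new denominators $t_3,t_4,t_5$ and then $t_4,t_5,t_6$ must be shown to cancel. The point is that $t_5,t_6$ never truly appear in denominators, because the numerator $Q$ becomes divisible by the needed powers once the relation is composed. (iii) Treat negative $n$ by the reversal symmetry.

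The main obstacle is step (ii). Polynomiality, rather than mere Laurentness, in the outer variables $t_0,t_1,t_5,t_6$ requires exact divisibility of $Q$ by $t_{m-4}^2t_{m-3}^3t_{m-2}^2$ modulo the earlier relations, and the high exponents ($3$ on the middle variable) make the coprimality bookkeeping delicate. I would first try to factor this divisibility through the invariants. Using (\ref{79}) at two consecutive shifts, I would eliminate $H_2$ and then $H_1$ to obtain a lower-degree identity, and expect the divisibility to reduce to a short polynomial computation.
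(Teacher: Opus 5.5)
The paper does not prove this proposition itself; it only cites \cite{Hone2}. Judged as a proof, your proposal has a genuine gap: everything that makes the statement nontrivial is placed in step (ii) and left as an expectation. Ordinary Laurentness, $t_n\in\mathbb{Z}[t_0^{\pm1},\ldots,t_6^{\pm1};H_1,H_2,\alpha_2]$, already follows from Theorem \ref{76888766}, because (\ref{67549876})--(\ref{677776656}) express $\alpha_0,\alpha_1$ as elements of $\mathbb{Z}[t_0^{\pm1},\ldots,t_6^{\pm1};H_1,H_2,\alpha_2]$. So the whole content of the proposition is the absence of negative powers of $t_0,t_1,t_5,t_6$. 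For that you cannot simply ``apply the general lemma''. You never state that lemma. The Caterpillar lemma of \cite{Fomin} does not apply as is: it concerns exchanges $x\,x'=P$ with $P$ independent of $x,x'$, and it concludes Laurentness in all seed variables. By contrast, (\ref{786453209}) has the shape $t_{n+7}\,t_{n+2}^2t_{n+3}^3t_{n+4}^2+t_n\,t_{n+3}^2t_{n+4}^3t_{n+5}^2=F(t_{n+1},\ldots,t_{n+6})$, and the target ring is only partially localized. Checking $t_7,\ldots,t_{13}$ and their ``pairwise coprimality'' says nothing about all $n$ unless you also prove a propagation statement.

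Here is what step (ii) becomes when made precise. By shift invariance, $t_{m+1}\in\mathbb{Z}[t_1,t_2,t_3^{\pm1},t_4^{\pm1},t_5^{\pm1},t_6,t_7;H_1,H_2,\alpha_2]\subseteq\mathcal{R}[t_5^{-1}]$. Only $t_5$ is at issue here, since $t_3,t_4$ are already units of $\mathcal{R}$. The recurrence gives $t_{m+1}\in\mathcal{R}[(t_{m-4}t_{m-3}t_{m-2})^{-1}]$. Since $\mathcal{R}$ is a UFD and $t_5$ is prime in it, $t_{m+1}\in\mathcal{R}$ follows once $t_5\nmid t_k$ for $k=m-4,m-3,m-2$. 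Two things are still missing.

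(a) For $t_8,t_9,t_{10}$ the prime $t_5$ itself occurs in the denominator, so the cancellation has to be computed. For example, modulo $t_5^2$ one has $Q(t_1,\ldots,t_7)\equiv -t_3^2A\,(A-t_4t_5t_6H_1)$ with $A=t_3t_6^2+t_4^2t_7$. This is divisible by $t_5^2$ precisely because the explicit form of $t_7$ gives $A\equiv 0\pmod{t_5}$.

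(b) For all larger $m$ you need $t_5\nmid t_k$ in $\mathcal{R}$ for every $k\geq 7$, and, by the reversal symmetry, $t_1\nmid t_k$ for every $k\leq -1$. In other words, no later $t_k$ may vanish identically on the hypersurface $t_5=0$. This is an infinite family of conditions and the real heart of the proposition. Your plan gives no argument for it: no proved reduction to finitely many cases, and no specialization or singularity-confinement argument. As it stands, the proposal checks finitely many terms but does not establish the statement for all $n\in\mathbb{Z}$.
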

From this proposition follows the following algorithm for constructing an integer $R_7$ sequence. We choose arbitrarily the following data:
\[
t_k=\pm 1\;\; \mbox{for}\;\;  k=2, 3, 4\;\; \mbox{and}\;\;   \left(t_0, t_1, t_5, t_6;  H_1, H_2, \alpha_2\right)\in\mathbb{Z}^7.
\] 
Using formulas (\ref{67549876}) and (\ref{677776656}) we compute the values of the parameters $\alpha_0$ and $\alpha_1$.
\begin{remark}
The proof of Propositions \ref{67543224} and \ref{644324} is of the same type. The idea of the proof is contained in \cite{Hone2}. For each $R_{2g+3}$ there exists an equivalent equation of type (\ref{7863555555257}) or (\ref{786453209}). In turn, for each such equivalent equation a statement similar to Propositions \ref{67543224} or \ref{644324} holds.
\end{remark}
\begin{remark}
Examples of equations $R_{2g+3}$, namely $R_7$ and $R_9$, were written out in \cite{Svinin8}. At the end of that paper we conjectured the existence of an infinite family of such equations possessing the Laurent property. In \cite{Svinin7} it was shown how to construct this family and the Laurent property was proved for these equations. There we also defined an infinite family of even-order equations which should also have the Laurent property. Computations using computer algebra confirm, although do not prove, this fact.
\end{remark}

\section*{Acknowledgements}

This work was carried out within the framework of  the state assignment of the Ministry of Education
and Science of the Russian Federation on the project No.  126021217175-3.

\end{document}